\newtheorem{ass}{\bf Assumption}
\newtheorem{lma}{\bf Lemma}
\newtheorem{thm}{\bf Theorem}
\begin{document}

\title{Multiobjective Controller Design by Solving a Multiobjective Matrix Inequality Problem}

\author{Wei-Yu~Chiu
\thanks{W.-Y. Chiu is with the Multiobjective Control Lab, Department of Electrical Engineering, Yuan Ze University, Taoyuan 32003, Taiwan
        (email: wychiu@saturn.yzu.edu.tw).}%
\thanks{This work was supported by the Ministry of Science and Technology of Taiwan under Grant 102-2218-E-155-004-MY3.}
\thanks{This paper is a postprint of a paper submitted to and accepted for publication in IET Control Theory and Applications and is subject to Institution of Engineering and Technology Copyright. The copy of record is available at IET Digital Library.}
\thanks{doi: 10.1049/iet-cta.2014.0026}
 }
\maketitle

\begin{abstract}
In this study, linear matrix inequality (LMI) approaches and multiobjective (MO) evolutionary algorithms are integrated to design controllers.
An MO matrix inequality problem (MOMIP) is first defined.
A hybrid MO differential evolution (HMODE) algorithm is then developed to solve the MOMIP.
The hybrid algorithm combines deterministic and stochastic searching schemes.
In the solving process, the deterministic part aims to exploit the structures of matrix inequalities, and
the stochastic part is used to  fully explore the decision variable space.
 Simulation results show that the HMODE algorithm can produce an approximated Pareto front (APF)
and Pareto-efficient controllers that stabilize the associated controlled system.
In contrast with single-objective designs using LMI approaches,
the proposed MO methodology can clearly  illustrate how the objectives involved affect each other, i.e.,
a broad perspective on optimality is provided. This facilitates the selecting process for a representative design, and particularly the design that corresponds to a nondominated vector lying in the knee region of the APF. In addition, controller gains can be readily modified to incorporate the preference or need of a system designer.
\end{abstract}

\newpage

\section{Introduction}\label{sec_intro}

Controller design problems with multiple objectives  have been extensively investigated because real-world designs mostly involve multiple objectives that need to be achieved~\cite{app1,app2,appli5}.
In the literature, attaining multiple objectives in a control system has been interpreted differently, yielding various multiobjective (MO) design approaches.
For example, in~\cite{appli1} and \cite{appli6},
an MO controller design  means that
the resulting controller should satisfy
some inequality constraints related to several aspects of system performance. In this case,
 the design procedure does not involve minimising any cost functions (or objective functions) in the objective space.
Furthermore, bounds on system specifications are often prescribed rather than being assigned through an optimisation process.

In certain scenarios, an MO design leads to solving a minimisation problem with
only one cost function~\cite{GA_based,appli2,appli4,C_2_1,decay_rate,C_2_2,C3_2}.
A series of single-objective (SO) formulations is mostly used, e.g.,  linear weighting methods, weighted geometric mean approaches,
boundary intersection approaches, and $\varepsilon$-constraint methods~\cite{decom1,decom2,decom3,decom,Coello}.
Although the original design of a control system considers the optimisation of  multiple cost functions,
  an SO problem (SOP) instead of an MO problem (MOP) is solved in the end~\cite{appli3,SO_1,SO_2}.

In contrast with the cases in which none or exactly one cost function is involved, an MO design may indeed
address two or more  cost functions in the design procedure.
Existing MO design approaches often employ advanced multiobjective evolutionary algorithms (MOEAs) to  evaluate design parameters~\cite{C3_1,app3,C3_4,review}, yielding an approximated Pareto front (APF).
System designers thus have an advantage of being able to understand how multiple objectives affect each other
as compared to SO designs.
In this scenario,
 each point or vector on the APF associates with a Pareto-efficient design.
 The availability of the APF allows the designer
 to incorporate
their preference models into the selecting process for  the ``optimal'' trade-off design.
These advantages cannot be achieved through solving a SOP.

Based on above discussions,
using an MOEA to solve an MOP for parameter evaluation can be a promising design approach.
For controller designs,  linear matrix inequality (LMI) approaches have been widely applied to system designs~\cite{Boyd_LMI}, e.g., predictive control designs~\cite{predic1,predic2},
estimation/observer designs~\cite{Hinf_chiu,case1}, and nonlinear system designs~\cite{Tanaka_fuzzy}.
It becomes constructive to combine these two tools to develop an MO design approach.
In~\cite{case1}, an attempt was made to design a networked system (not a control system) with multiple objectives,
but a few restrictions on the associated MOP were imposed.
For instance, only two objectives can be involved, and the resulting MOP must reduce to an eigenvalue problem after one objective is removed.
These restrictions impede the application of the proposed methodology in a general setting.
Further research is needed.

In this paper, we aim to integrate MOEAs and LMI approaches for MO controller designs.
To this end, a multiobjective matrix inequality problem (MOMIP) is  defined.
An MOEA termed hybrid multiobjective differential evolution (HMODE) algorithm is then proposed to solve it.
Once the MOMIP has been solved, Pareto optimality can be achieved by producing Pareto-efficient controllers.
In our framework, the form of the MOMIP results from using LMI approaches in consideration of multiple objectives.
It is basically an MOP with matrix inequalities (MIs) as constraints.
In such an MOP, two types of decision variables, matrix and non-matrix decision variables, are involved.
Regarding the hybrid algorithm, it adopts both stochastic and deterministic searching schemes, i.e.,
differential evolution (DE) algorithms and interior-point methods, respectively.
During the solving process, the former and latter schemes are used to determine
 the non-matrix and the matrix decision variables, respectively.

The main contributions of this study are as follows.
We integrate LMI approaches and MOEAs for controller designs in consideration of multiple objectives,
 which has not been thoroughly investigated in the literature.
The MOMIP is defined and the novel HMODE algorithm is proposed to solve it.
By doing so, we connect the evolutionary algorithms field with the control field
by providing a framework for the development of hybrid MOEAs that are applicable to MO controller designs.

The following notation is used throughout this paper.
The sets of positive integers and real numbers are denoted by $\mathbb{Z}_+$ and $\mathbb{R}$, respectively.
For a matrix $\bm{A}$, $\bm{A}>0$ means that $\bm{A}$ is symmetric and positive-definite.
We further define $\bm{A} < 0$ as  $-\bm{A} >0$.
To simplify our notation,
the star mark ``$\star$'' is used in the following two ways:
\begin{equation*}
    \left[
       \begin{array}{cc}
         \bm{A} & \star \\
         \bm{B} & \bm{C} \\
       \end{array}
     \right]
     =
      \left[
       \begin{array}{cc}
         \bm{A} & \bm{B}^T \\
         \bm{B} & \bm{C} \\
       \end{array}
     \right] \mbox{ and } (\bm{A},\star)=\bm{A}+\bm{A}^T
\end{equation*}
for appropriate dimensions of matrices $\bm{A},\bm{B}$, and $\bm{C}$.
For the terminology of Pareto optimality, i.e.,
the Pareto dominance, the Pareto optimal set, and the Pareto front,
the reader can refer to~\cite{Coello}.

The rest of this paper is organized as follows.
In Section~\ref{sec_MOMIP}, we define the MOMIP
and derive its equivalent form.
Design examples are presented in Section~\ref{sec_example}.
The HMODE algorithm is developed in Section~\ref{sec_HMODE}.
Finally, Section~\ref{sec_con} concludes this paper.

\section{Multiobjective Matrix Inequality Problem (MOMIP)}\label{sec_MOMIP}

 In our MO design approach, the MOMIP is defined as
\begin{equation}\label{eq_MOMIP}
 \begin{split}
   \min_{\bm{\alpha},\bm{X}} \; &  \bm{f}(\bm{\alpha})  \\
  \mbox{subject to }&  \mathcal{MI}(\bm{\alpha},\bm{X}) < 0.
 \end{split}
\end{equation}
In~(\ref{eq_MOMIP}),
\begin{equation}\label{eq_dim}
    \bm{\alpha}=  \left[
                    \begin{array}{cccc}
                     \alpha_1 & \alpha_2 & \cdots & \alpha_M \\
                    \end{array}
                  \right]^T
    \in \mathbb{R}^M\mbox{ and }
   \bm{f}(\bm{\alpha})\in \mathbb{R}^N
\end{equation}
represent the vector of all non-matrix decision variables and the vector-valued objective function, respectively.
The $\bm{X}$ represents a collection of  matrix decision variables. The $\mathcal{MI}(\bm{\alpha},\bm{X})$ denotes a matrix-valued function, and the condition $\mathcal{MI}(\bm{\alpha},\bm{X}) < 0$  represents MI constraints.
Regarding the cost function $ \bm{f}(\bm{\alpha})$, it can be
  a vector of $H_\infty$ attenuation levels when an $H_\infty$ design is adopted.
   For optimal control,
 each entry of $\bm{f}(\bm{\alpha})$ can represent an upper bound on a quadratic performance function.
 For a mixed  $H_2/H_\infty$ design,
  $\bm{f}(\bm{\alpha})$ can represent  a mix of  performance indices.
Regarding
the associated MI
\begin{equation}\label{eq_MIP}
\mathcal{MI}(\bm{\alpha},\bm{X}) <0
\end{equation}
the following assumption is made.

\begin{ass}\label{ass_1}
 The MI problem~(\ref{eq_MIP}) is convex in the variable $\bm{X}$ once $\bm{\alpha}$ is given.
\end{ass}

In Assumption~\ref{ass_1},
the convexity in $\bm{X}$ originates from the fact that many SO control problems  can be designed by solving a convex LMI problem (LMIP).
It should be noted that Assumption~\ref{ass_1} does not imply that~(\ref{eq_MIP}) is convex in both $\bm{X}$ and $\bm{\alpha}$.
It will become clear that by using LMI approaches, controller design problems with multiple objectives can naturally lead to~(\ref{eq_MOMIP}).

For the purpose of algorithm development, we derive another form that is equivalent to the MOMIP~(\ref{eq_MOMIP}).
The HMODE algorithm will be developed based on the derived form.
To begin with the derivation, let us consider conventional SO designs that adopt LMI approaches, yielding
the  eigenvalue problem (EVP)
 \begin{equation}\label{eq_EVP}
 (\lambda^*(\bm{\alpha}), \bm{X}^*(\bm{\alpha}))
  =
  \begin{array}{l}
    \arg_{\lambda,\bm{X}}  \min_{\lambda,\bm{X}} \;  \lambda  \\
    \mbox{subject to }  \mathcal{MI}(\bm{\alpha},\bm{X}) < \lambda \bm{I}.
  \end{array}
\end{equation}
In~(\ref{eq_EVP}), the bound $\lambda$ on the maximum eigenvalue
of the matrix $\mathcal{MI}(\bm{\alpha},\bm{X})$ is minimised. Here,
$\bm{X}^*(\bm{\alpha})$ represents the matrix variable that achieves the minimum $\lambda^*(\bm{\alpha})$.
It is noted that, while $\bm{\alpha}$ is the non-matrix decision variable in the MOMIP~(\ref{eq_MOMIP}), it
is not a decision variable in~(\ref{eq_EVP}).
The values of $\lambda^*(\bm{\alpha})$ and $\bm{X}^*(\bm{\alpha})$ depend on the value of $\bm{\alpha}$.
 According to Assumption~\ref{ass_1},
  the EVP~(\ref{eq_EVP}) is convex in the variable $\bm{X}$ given $\bm{\alpha}$, implying that
  it can be solved by interior-point methods.

The following lemma relates the feasibility of the MOMIP~(\ref{eq_MOMIP}) to the EVP~(\ref{eq_EVP}).
\begin{lma}\label{lma_1}
If a pair~$(\tilde{\bm{\alpha}},\tilde{\bm{X}})$ is feasible in~(\ref{eq_MOMIP}), i.e., $  \mathcal{MI}(\tilde{\bm{\alpha}},\tilde{\bm{X}}) < 0$, then $\mathcal{MI}(\tilde{\bm{\alpha}},\bm{X}^*(\tilde{\bm{\alpha}})) < 0$.
\end{lma}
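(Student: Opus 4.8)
The plan is to transfer the strict negative-definiteness of $\mathcal{MI}(\tilde{\bm{\alpha}},\tilde{\bm{X}})$ to $\mathcal{MI}(\tilde{\bm{\alpha}},\bm{X}^*(\tilde{\bm{\alpha}}))$ through the optimal value $\lambda^*(\tilde{\bm{\alpha}})$ of the EVP~(\ref{eq_EVP}). The key observation is that $\tilde{\bm{X}}$ is itself an admissible candidate for the minimization in~(\ref{eq_EVP}) once $\bm{\alpha}$ is fixed to $\tilde{\bm{\alpha}}$, and that the eigenvalue bound it certifies is strictly negative.

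First I would note that the hypothesis $\mathcal{MI}(\tilde{\bm{\alpha}},\tilde{\bm{X}})<0$ means this symmetric matrix is negative-definite, so its largest eigenvalue is strictly negative. Hence there exists a scalar $\lambda_0<0$ with $\mathcal{MI}(\tilde{\bm{\alpha}},\tilde{\bm{X}})<\lambda_0\bm{I}$; concretely, any $\lambda_0$ lying strictly between that largest eigenvalue and $0$ works. This shows that the pair $(\lambda_0,\tilde{\bm{X}})$ is feasible for the EVP~(\ref{eq_EVP}) at $\bm{\alpha}=\tilde{\bm{\alpha}}$.

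Next I would invoke the optimality of $\lambda^*(\tilde{\bm{\alpha}})$. Since $\lambda^*(\tilde{\bm{\alpha}})$ is the minimum of $\lambda$ over the feasible set of~(\ref{eq_EVP}), the existence of the feasible pair $(\lambda_0,\tilde{\bm{X}})$ forces $\lambda^*(\tilde{\bm{\alpha}})\le\lambda_0<0$. By the definition of $\bm{X}^*(\tilde{\bm{\alpha}})$ as the minimizer, $\mathcal{MI}(\tilde{\bm{\alpha}},\bm{X}^*(\tilde{\bm{\alpha}}))<\lambda^*(\tilde{\bm{\alpha}})\bm{I}$. Because $\lambda^*(\tilde{\bm{\alpha}})<0$ gives $\lambda^*(\tilde{\bm{\alpha}})\bm{I}<0$, transitivity of the negative-definite ordering yields $\mathcal{MI}(\tilde{\bm{\alpha}},\bm{X}^*(\tilde{\bm{\alpha}}))<0$, which is the claim.

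The argument is short, and the only delicate point will be the bookkeeping between the two strict inequalities—the EVP constraint $<\lambda\bm{I}$ and the negative-definiteness $<0$—so that the threshold $\lambda_0$ is chosen strictly negative rather than merely nonpositive. I would also remark that Assumption~\ref{ass_1} is not needed for the implication itself; its role is only to guarantee that~(\ref{eq_EVP}) is well-posed and that the minimizer $\bm{X}^*(\tilde{\bm{\alpha}})$ exists, so the essential content reduces to monotonicity of the optimal value $\lambda^*(\cdot)$ with respect to feasible candidates.
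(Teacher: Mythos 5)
Your proposal is correct and follows essentially the same route as the paper: exhibit a strictly negative feasible $\lambda$ for the EVP at $\tilde{\bm{\alpha}}$ using $\tilde{\bm{X}}$ itself (the paper's concrete choice is $\lambda_{max}/2$, an instance of your $\lambda_0$), conclude $\lambda^*(\tilde{\bm{\alpha}})<0$ by optimality, and then read off the claim from the feasibility of the optimal pair. Your closing remark about the careful placement of the strict inequalities and the limited role of Assumption~\ref{ass_1} is accurate but does not change the argument.
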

\begin{proof}
Let~$\lambda_{max}$ be the maximum eigenvalue of~$\mathcal{MI}(\tilde{\bm{\alpha}},\tilde{\bm{X}})$.
The condition $\mathcal{MI}(\tilde{\bm{\alpha}},\tilde{\bm{X}}) < 0$ implies that $\lambda_{max}<0$
and thus $\mathcal{MI}(\tilde{\bm{\alpha}},\tilde{\bm{X}}) \leq \lambda_{max} \bm{I} < \frac{\lambda_{max}}{2} \bm{I}<0$.
The pair $(\frac{\lambda_{max}}{2},\tilde{\bm{X}})$ is a feasible solution of the EVP~(\ref{eq_EVP}) given $\bm{\alpha}=\tilde{\bm{\alpha}}$.
By the definitions of~$\lambda^*(\cdot)$ and~$\bm{X}^*(\cdot)$ in~(\ref{eq_EVP}),
 we have $\lambda^*(\tilde{\bm{\alpha}})\leq \frac{\lambda_{max}}{2} <0$ and $\mathcal{MI}(\tilde{\bm{\alpha}},\bm{X}^*(\tilde{\bm{\alpha}})) <  \lambda^*(\tilde{\bm{\alpha}}) \bm{I}$, which completes the proof.
\end{proof}

With the help of Lemma~\ref{lma_1}, the following theorem shows that the MOMIP
\begin{equation}\label{eq_MOMIP2}
 \begin{split}
   \min_{\bm{\alpha}} \; &  \bm{f}(\bm{\alpha})  \\
  \mbox{subject to }&  \mathcal{MI}(\bm{\alpha},\bm{X}^*(\bm{\alpha})) < 0
 \end{split}
\end{equation}
is equivalent to~(\ref{eq_MOMIP}).

\begin{thm}\label{thm_equi}
An $\bm{\alpha}^*$ is a Pareto optimal solution of~(\ref{eq_MOMIP2})
if and only if there exists a matrix $\bm{X}'$ such that $(\bm{\alpha}^*,\bm{X}')$ is a Pareto optimal solution of~(\ref{eq_MOMIP}).
\end{thm}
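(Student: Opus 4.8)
The plan is to reduce both problems to optimisation over the single variable $\bm{\alpha}$, exploiting the facts that they share the same objective $\bm{f}(\bm{\alpha})$ (which ignores the matrix variable entirely) and that their feasible sets coincide after projection onto the $\bm{\alpha}$-space. First I would introduce the two projected feasible sets
\[
\mathcal{A} = \{\bm{\alpha} : \exists \, \bm{X} \text{ with } \mathcal{MI}(\bm{\alpha},\bm{X})<0\}
\]
and
\[
\mathcal{B} = \{\bm{\alpha} : \mathcal{MI}(\bm{\alpha},\bm{X}^*(\bm{\alpha}))<0\},
\]
where $\mathcal{A}$ is the set of admissible $\bm{\alpha}$-components of~(\ref{eq_MOMIP}) and $\mathcal{B}$ is the feasible set of~(\ref{eq_MOMIP2}). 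The core claim is $\mathcal{A}=\mathcal{B}$. The inclusion $\mathcal{B}\subseteq\mathcal{A}$ is immediate, since any $\bm{\alpha}\in\mathcal{B}$ is witnessed in~(\ref{eq_MOMIP}) by the choice $\bm{X}=\bm{X}^*(\bm{\alpha})$. The reverse inclusion $\mathcal{A}\subseteq\mathcal{B}$ is precisely Lemma~\ref{lma_1}: if some $\tilde{\bm{X}}$ makes $(\bm{\alpha},\tilde{\bm{X}})$ feasible in~(\ref{eq_MOMIP}), then $\mathcal{MI}(\bm{\alpha},\bm{X}^*(\bm{\alpha}))<0$, so $\bm{\alpha}\in\mathcal{B}$.

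The second ingredient I would establish is that, because $\bm{f}$ does not depend on $\bm{X}$, Pareto dominance in the joint decision space of~(\ref{eq_MOMIP}) collapses to Pareto dominance in $\bm{\alpha}$-space alone. Indeed, a feasible pair $(\bm{\alpha},\bm{X})$ dominates $(\bm{\alpha}^*,\bm{X}')$ exactly when $\bm{f}(\bm{\alpha})$ dominates $\bm{f}(\bm{\alpha}^*)$, a relation involving only the $\bm{\alpha}$-components. Combined with $\mathcal{A}=\mathcal{B}$, the assertion ``no feasible point of~(\ref{eq_MOMIP}) dominates $(\bm{\alpha}^*,\bm{X}')$'' becomes ``no $\bm{\alpha}\in\mathcal{A}=\mathcal{B}$ has $\bm{f}(\bm{\alpha})$ dominating $\bm{f}(\bm{\alpha}^*)$'', which is verbatim the non-domination condition defining Pareto optimality of $\bm{\alpha}^*$ in~(\ref{eq_MOMIP2}).

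With these two observations in hand, both implications are short. For the \emph{if} direction, if $(\bm{\alpha}^*,\bm{X}')$ is Pareto optimal in~(\ref{eq_MOMIP}), then $\bm{\alpha}^*\in\mathcal{A}=\mathcal{B}$ is feasible for~(\ref{eq_MOMIP2}), and the non-domination condition transfers directly, so $\bm{\alpha}^*$ is Pareto optimal there. For the \emph{only if} direction, I would exhibit the witnessing matrix by setting $\bm{X}'=\bm{X}^*(\bm{\alpha}^*)$: since $\bm{\alpha}^*\in\mathcal{B}=\mathcal{A}$, the pair $(\bm{\alpha}^*,\bm{X}')$ is feasible in~(\ref{eq_MOMIP}), and any dominating feasible pair would again project to a point of $\mathcal{A}=\mathcal{B}$ dominating $\bm{\alpha}^*$ in~(\ref{eq_MOMIP2}), a contradiction.

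The main obstacle is conceptual rather than computational: making rigorous the reduction of joint-space feasibility and dominance to the $\bm{\alpha}$-space. The enabling fact is Lemma~\ref{lma_1}, which ensures that restricting attention to the distinguished matrix $\bm{X}^*(\bm{\alpha})$ discards no admissible $\bm{\alpha}$; without it, $\mathcal{B}$ could be a proper subset of $\mathcal{A}$ and the equivalence would break down. I would therefore foreground the $\mathcal{A}=\mathcal{B}$ step together with the independence of $\bm{f}$ from $\bm{X}$, after which both directions follow by direct substitution with no further calculation.
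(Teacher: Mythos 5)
Your proposal is correct and follows essentially the same route as the paper's proof: both directions hinge on Lemma~\ref{lma_1} to pass from an arbitrary feasible witness $\bm{X}$ to the distinguished $\bm{X}^*(\bm{\alpha})$, and on the fact that $\bm{f}$ ignores the matrix variable. Your packaging of this as the set equality $\mathcal{A}=\mathcal{B}$ plus the collapse of dominance to $\bm{\alpha}$-space is a cleaner organisation of the paper's two contraposition arguments, but it is not a different proof.
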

\begin{proof} ``$\Rightarrow$''
Since $\bm{\alpha}^*$ is Pareto optimal, it is a  feasible solution of~(\ref{eq_MOMIP2}), i.e.,  $  \mathcal{MI}(\bm{\alpha}^*,\bm{X}^*(\bm{\alpha}^*)) < 0$.
By taking $\bm{X}'=\bm{X}^*(\bm{\alpha}^*)$, the pair~$(\bm{\alpha}^*,\bm{X}')$ is a feasible solution of~(\ref{eq_MOMIP}).
We now proceed by contraposition. Suppose that there exists a feasible pair~$(\bm{\alpha}',\bm{X}'')$ of~(\ref{eq_MOMIP}) such that
 $\bm{f}(\bm{\alpha}')$ dominates $\bm{f}(\bm{\alpha}^*)$, denoted by~$\bm{f}(\bm{\alpha}')  \prec \bm{f}(\bm{\alpha}^*)$.
Since $ \mathcal{MI}(\bm{\alpha}',\bm{X}'') < 0$, according to Lemma~\ref{lma_1}, we have $ \mathcal{MI}(\bm{\alpha}',\bm{X}^*(\bm{\alpha}')) < 0$.
The conditions~$\bm{f}(\bm{\alpha}')\prec \bm{f}(\bm{\alpha}^*)$ and~$ \mathcal{MI}(\bm{\alpha}',\bm{X}^*(\bm{\alpha}')) < 0$
contradict the Pareto optimality of $\bm{\alpha}^*$ in~(\ref{eq_MOMIP2}). Therefore,
such a pair~$(\bm{\alpha}',\bm{X}'')$ does not exist, and  the pair $(\bm{\alpha}^*,\bm{X}')$ must be Pareto optimal in~(\ref{eq_MOMIP}).

``$\Leftarrow$''
Since $(\bm{\alpha}^*,\bm{X}')$ is Pareto optimal, it is feasible in~(\ref{eq_MOMIP}) and, according to Lemma~\ref{lma_1}, ~$\bm{\alpha}^*$ is a feasible solution of~(\ref{eq_MOMIP2}).
We proceed by contraposition. Suppose that~$\bm{\alpha}^*$ is not Pareto optimal in~(\ref{eq_MOMIP2}).
There exists a  feasible~$\bm{\alpha}'$, i.e., $ \mathcal{MI}(\bm{\alpha}',\bm{X}^*(\bm{\alpha}') ) < 0$, such that
 $\bm{f}(\bm{\alpha}')$ dominates $\bm{f}(\bm{\alpha}^*)$, denoted by~$\bm{f}(\bm{\alpha}')  \prec \bm{f}(\bm{\alpha}^*)$.
Let $\bm{X}''=\bm{X}^*(\bm{\alpha}')$.  We have
$ \mathcal{MI}(\bm{\alpha}',\bm{X}'' ) < 0$ and~$\bm{f}(\bm{\alpha}')  \prec \bm{f}(\bm{\alpha}^*)$,
which contradicts the fact that the pair $(\bm{\alpha}^*,\bm{X}')$ is Pareto optimal in~(\ref{eq_MOMIP}).
Therefore, the feasible~$\bm{\alpha}'$ that dominates~$\bm{\alpha}^*$ in~(\ref{eq_MOMIP2})
does not exist. We conclude that $\bm{\alpha}^*$ is Pareto optimal in~(\ref{eq_MOMIP2}).
\end{proof}

Theorem~\ref{thm_equi} is  key to the successful application of MOEAs to solving MOMIPs.
As  the MOMIP~(\ref{eq_MOMIP2}) is equivalent to~(\ref{eq_MOMIP}) according to Theorem~\ref{thm_equi},
we can solve~(\ref{eq_MOMIP2}) to obtain the design parameters.
It is  worth noting that $\bm{X}$ and $\bm{\alpha}$ are the decision variables in~(\ref{eq_MOMIP}) while  $\bm{\alpha}$ is the only decision variable
in~(\ref{eq_MOMIP2}). To some extent, the MOMIP~(\ref{eq_MOMIP2}) can be regarded as a conventional MOP with $\bm{\alpha}$ as the decision variable.

\section{Design Examples}\label{sec_example}

In this section, design examples associated with the MOMIP~(\ref{eq_MOMIP2}) are investigated.

\subsection{Example 1: Robust Control Design}\label{subsec_uncertain}

Consider a robust $H_\infty$  fuzzy control design for the uncertain fuzzy system \cite{uncertain1,uncertain2,uncertain3,uncertain4,Uncertain_fuzzy}
\begin{equation}\label{eq_m2}
    \begin{split}
   \dot{ \bm{x}}(t) { }={ }& \sum_{i=1}^{\mathcal{N}_r}  \xi_i(\bm{\nu}(t)) \{  [ \bm{A}_i+\Delta \bm{A}_i] \bm{x}(t)+ \bm{B}_{1_i}\bm{w}(t)    +\bm{B}_{2_i}\bm{u}(t) \},   \bm{x}(0)=0 \\
     \bm{y}(t) { }={ }       &  \sum_{i=1}^{\mathcal{N}_r} \xi_i(\bm{\nu}(t)) \bm{C}_{i} \bm{x}(t)+ \bm{D}_{i}  \bm{u}(t) \\
    \end{split}
\end{equation}
where
\begin{equation}\label{eq_def_Hi}
     \Delta \bm{A}_i= F(\bm{x}(t),t)\bm{H}_{i}.
\end{equation}
In~(\ref{eq_m2}),
$\bm{x}(t)$ represents the
state vector, $\bm{w}(t)$ the external disturbances, and~$\bm{y}(t)$ the system output.
The fuzzy controller
\begin{equation}\label{eq_ut}
\bm{u}(t)  =   \sum_{i=1}^{\mathcal{N}_r}  \xi_i(\bm{\nu}(t))\bm{K}_i\bm{x}(t)
\end{equation}
is considered, where
$\xi_i,i=1,2,...,\mathcal{N}_r$, represent normalized fuzzy bases,
$\bm{\nu}(t)$ is the premise vector, $\mathcal{N}_r$ is the number of fuzzy rules, and
$\bm{K}_i,i=1,2,...,\mathcal{N}_r,$ are the controller gains that need to be designed.
In~(\ref{eq_def_Hi}),
  $\bm{H}_{i},i=1,2,...,\mathcal{N}_r,$ are known matrices that characterize the structure of $\Delta \bm{A}_i$,
  and $F(\bm{x}(t),t)$ models the uncertainty
with
\begin{equation}\label{eq_rho}
   || F(\bm{x}(t),t)  ||\leq \frac{1}{\rho}.
\end{equation}

The following lemma shows the LMI formulation for
the controller design~\cite{Uncertain_fuzzy}.
\begin{lma}\label{thm_uncertain_fuzzy_sys}
The system (\ref{eq_m2}) has an $\mathcal{L}_2$-gain less than or equal to $\gamma$, i.e.,
\begin{equation}\label{eq_H_inf_index}
    \int_0^\infty     \bm{y}(t)^T    \bm{y}(t)dt \leq \gamma^2   \int_0^\infty     \bm{w}(t)^T    \bm{w}(t)dt, \bm{x}(0)=0
\end{equation}
for all
\begin{equation*}
 \int_0^\infty     \bm{w}(t)^T    \bm{w}(t)dt<\infty
\end{equation*}
if there exist a matrix $\bm{Z}$, a positive scalar $\delta$, and matrices $\bm{M}_i,i=1,2,...,\mathcal{N}_r,$
such that
\begin{equation}\label{eq_1_LMI}
    \bm{Z}>0 \mbox{ and } \bm{\Omega}_{ij}+\bm{\Omega}_{ji}< 0, 1\leq i \leq j\leq  \mathcal{N}_r
\end{equation}
where
\begin{equation*}
  \bm{\Omega}_{ij}=
\left[
  \begin{array}{ccc}
    (\bm{A}_i \bm{Z},\star)+(\bm{B}_{2_i} \bm{M}_j,\star) & \star & \star \\
    \tilde{\bm{B}}_{1_i}^T & -\gamma \bm{I} & \star \\
    \tilde{\bm{C}}_{i}\bm{Z}+ \tilde{\bm{D}}_{i}\bm{M}_j& 0 & -\gamma \bm{I} \\
  \end{array}
\right]
\end{equation*}
with
\begin{equation}\label{eq_delta_def}
       \tilde{\bm{B}}_{1_i}{ }={ }
       \left[
                                       \begin{array}{cc}
                                         \delta\bm{I} & \bm{B}_{1_i} \\
                                       \end{array}
                                     \right],
       \tilde{\bm{C}}_{i}= \left[
                            \begin{array}{cc}
                              \frac{\gamma }{\rho \delta}  \bm{H}_{i}^T &  \sqrt{2} \bm{C}_{i}^T\\
                            \end{array}
                          \right]^T,\mbox{ and }
 \tilde{\bm{D}}_{i}{ }={ }
  \left[
                            \begin{array}{cc}
                           0&     \sqrt{2}  \bm{D}_{i}^T\\
                            \end{array}
                          \right]^T.
\end{equation}
The controller gains $\bm{K}_i,i=1,2,...,\mathcal{N}_r,$
can be recovered by
$\bm{K}_i=\bm{M}_i \bm{Z}^{-1}$.
\end{lma}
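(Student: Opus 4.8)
The plan is to derive the $\mathcal{L}_2$-gain bound (\ref{eq_H_inf_index}) from a quadratic Lyapunov dissipation inequality and then recast that inequality as the stated LMI through a congruence transformation and Schur complements. First I would substitute the fuzzy controller (\ref{eq_ut}) into the plant (\ref{eq_m2}) and use $\sum_{i=1}^{\mathcal{N}_r}\xi_i = 1$ to write the closed loop as $\dot{\bm{x}}(t) = \sum_{i,j}\xi_i\xi_j\{[\bm{A}_i+\Delta\bm{A}_i+\bm{B}_{2_i}\bm{K}_j]\bm{x}(t)+\bm{B}_{1_i}\bm{w}(t)\}$ and $\bm{y}(t) = \sum_{i,j}\xi_i\xi_j[\bm{C}_i+\bm{D}_i\bm{K}_j]\bm{x}(t)$. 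Taking $V(\bm{x}) = \bm{x}^T\bm{P}\bm{x}$ with $\bm{P} = \bm{Z}^{-1} > 0$, it suffices to show $\dot{V}+\bm{y}^T\bm{y}-\gamma^2\bm{w}^T\bm{w} < 0$ along the closed-loop trajectories for every nonzero pair $(\bm{x},\bm{w})$; integrating this from $0$ to $\infty$ and using $\bm{x}(0)=0$ with $V(\bm{x}(\infty))\ge 0$ immediately gives (\ref{eq_H_inf_index}).

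The second step is to eliminate the unknown uncertainty $\Delta\bm{A}_i = F(\bm{x},t)\bm{H}_i$. The terms it contributes to $\dot{V}$ have the form $(\bm{x}^T\bm{P}F\bm{H}_i\bm{x},\star)$, so I would bound $(\bm{P}F\bm{H}_i,\star)$ from above, in the matrix sense, by $\delta^2\bm{P}\bm{P}+\delta^{-2}\bm{H}_i^T F^T F\bm{H}_i$ for an arbitrary scalar $\delta>0$, and then invoke the norm bound (\ref{eq_rho}) to replace $F^T F$ by $\rho^{-2}\bm{I}$. This removes all dependence on $F$ and introduces exactly the free parameter $\delta$ seen in (\ref{eq_delta_def}): the coefficients $\delta$ in $\tilde{\bm{B}}_{1_i}$ and $\gamma/(\rho\delta)$ in $\tilde{\bm{C}}_i$ combine to reproduce the uncertainty level $\gamma/\rho$ while leaving $\delta$ as a tuning scalar. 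In effect the uncertainty is recast as a fictitious disturbance-to-output channel, which is why $\tilde{\bm{B}}_{1_i}$ and $\tilde{\bm{C}}_i$ are augmentations of $\bm{B}_{1_i}$ and $\bm{C}_i$ rather than those matrices themselves.

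The third step produces the LMI. Congruence by $\bm{Z}=\bm{P}^{-1}$ on the state block together with the substitution $\bm{M}_j=\bm{K}_j\bm{Z}$ turns $(\bm{P}(\bm{A}_i+\bm{B}_{2_i}\bm{K}_j),\star)$ into $(\bm{A}_i\bm{Z},\star)+(\bm{B}_{2_i}\bm{M}_j,\star)$, the $(1,1)$ block of $\bm{\Omega}_{ij}$. The quadratic output term $\bm{y}^T\bm{y}$ and the surviving uncertainty term are then pushed off the diagonal by Schur complements, generating the two $-\gamma\bm{I}$ blocks together with the row $\tilde{\bm{C}}_i\bm{Z}+\tilde{\bm{D}}_i\bm{M}_j$. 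To discharge the fuzzy double summation I would use the standard relaxation $\sum_{i,j}\xi_i\xi_j\bm{\Omega}_{ij}=\sum_{i\le j}\xi_i\xi_j(\bm{\Omega}_{ij}+\bm{\Omega}_{ji})$, so that requiring $\bm{\Omega}_{ij}+\bm{\Omega}_{ji}<0$ for $1\le i\le j\le\mathcal{N}_r$ forces the whole weighted sum to be negative and hence $\dot{V}+\bm{y}^T\bm{y}-\gamma^2\bm{w}^T\bm{w}<0$. Recovering $\bm{K}_i=\bm{M}_i\bm{Z}^{-1}$ then closes the argument.

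The step I expect to be the main obstacle is the constant bookkeeping that ties steps two and three together. The scalar $\delta$ must be threaded through the uncertainty bound so that it lands precisely in the $\delta\bm{I}$ and $(\gamma/\rho\delta)\bm{H}_i$ entries after the Schur complements, and the $\sqrt{2}$ weighting on $\bm{C}_i$ and $\bm{D}_i$ must be tracked consistently so that the performance and uncertainty channels can share a single $-\gamma\bm{I}$ block, in place of the more familiar $-\gamma^2\bm{I}$ and $-\bm{I}$ pair. Matching these constants to the stated $\tilde{\bm{B}}_{1_i}$, $\tilde{\bm{C}}_i$, and $\tilde{\bm{D}}_i$ is the delicate part, whereas the Lyapunov-to-$\mathcal{L}_2$ reduction, the congruence transformation, and the double-sum relaxation are routine.
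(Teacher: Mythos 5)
Your proposal is correct in outline and would yield the lemma, but it is worth noting how little the paper itself actually proves: the paper's entire proof is a pointer to Theorem~3.1 of the cited reference on uncertain fuzzy systems, plus the single observation that the uncertainty can be absorbed by the augmented disturbance $\tilde{\bm{w}}(t)=\left[\left(\tfrac{1}{\delta}F(\bm{x}(t),t)\bm{H}_i\bm{x}(t)\right)^T \;\; \bm{w}(t)^T\right]^T$. That augmentation is the paper's whole mechanism: writing $\Delta\bm{A}_i\bm{x}=\delta\cdot\tfrac{1}{\delta}F\bm{H}_i\bm{x}$ makes the fictitious disturbance enter through the $\delta\bm{I}$ block of $\tilde{\bm{B}}_{1_i}$, and penalising $\tfrac{\gamma}{\rho\delta}\bm{H}_i\bm{x}$ in the augmented output guarantees, via $\|F\|\le 1/\rho$, that the fictitious disturbance energy is dominated by the fictitious output energy, so the $\mathcal{L}_2$-gain bound for the augmented \emph{certain} system implies the stated bound for the original uncertain one. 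Your route instead re-derives the LMI from scratch: a quadratic storage function, a Young's-inequality bound $(\bm{P}F\bm{H}_i,\star)\le\delta^2\bm{P}\bm{P}+(\rho\delta)^{-2}\bm{H}_i^T\bm{H}_i$, congruence by $\bm{Z}$, Schur complements, and the $\bm{\Omega}_{ij}+\bm{\Omega}_{ji}$ relaxation. The two are algebraically the same completion of squares --- indeed you yourself observe that the uncertainty is ``recast as a fictitious disturbance-to-output channel'' --- but yours is self-contained where the paper's is a reduction to a known theorem. What your version buys is transparency about where $\delta$, $\gamma/(\rho\delta)$, and the $\sqrt{2}$ come from (the constant bookkeeping you rightly flag as the only delicate step); what the paper's buys is brevity and reuse of an already-verified LMI. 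Neither has a gap.
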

\begin{proof}
Referring  to  the proof of Theorem 3.1 in~\cite{Uncertain_fuzzy},
the result follows from using the augmented disturbance
\begin{equation*}
    \bm{\tilde{w}}(t)=
    \left[
      \begin{array}{cc}
        (\frac{1}{\delta}F(\bm{x}(t),t) \bm{H}_{i} \bm{x}(t))^T & \bm{w}(t)^T \\
      \end{array}
    \right]^T.
\end{equation*}
\end{proof}

In~\cite{Uncertain_fuzzy}, the values of $\gamma$,  $\rho$, and $\delta$ have been heuristically prescribed,
and a
feasibility problem consisting of the constraints~(\ref{eq_1_LMI})
has been dealt with.
For an SO design, we can minimise the attenuation level $\gamma$, and assign the values of the uncertainty tolerance $1/\rho$ in~(\ref{eq_rho}) and the
auxiliary variable $\delta$ in~(\ref{eq_delta_def}) in advance,
yielding an LMIP.
 These conventional design approaches lead to solving a feasibility problem or an SOP.

Suppose that we are interested in the
 controller gains so that the resulting system can tolerate as much uncertainty as possible and achieve
a minimal attenuation level simultaneously. In this scenario,
 it is desired to maximise $1/\rho$  and minimise $\gamma$.
Referring to~(\ref{eq_MOMIP2}), this can be achieved by choosing
\begin{equation}\label{eq_1_content}
\bm{\alpha}=
\left[
  \begin{array}{ccc}
     \gamma  & \rho & \delta \\
  \end{array}
\right]^T,
\bm{X}^*(\bm{\alpha}){ }={ } (\bm{Z}^*(\bm{\alpha}),\bm{M}_1^*(\bm{\alpha}),...,\bm{M}_{\mathcal{N}_r}^*(\bm{\alpha})), \mbox{ and }
 \bm{f}(\bm{\alpha}){ }={ }
\left[
  \begin{array}{ccc}
     \gamma  & \rho  \\
  \end{array}
\right]^T.
\end{equation}
Based on Theorem~\ref{thm_equi} and Lemma~\ref{thm_uncertain_fuzzy_sys},
the controller gains
\begin{equation*}
   \bm{K}_i^*(\bm{\alpha}){ }={ }  \bm{M}_i^*(\bm{\alpha}) \times [ \bm{Z}^*(\bm{\alpha}) ]^{-1},i=1,2,...,\mathcal{N}_r,
\end{equation*}
can be determined by solving
\begin{equation}\label{eq_ex1_MOMIP}
 \begin{split}
  \min_{ \gamma  , \rho , \delta, } \;&
\left[
  \begin{array}{cc}
     \gamma  & \rho  \\
  \end{array}
\right]^T
     \\
 \mbox{subject to }   &  \bm{Z}^*(\bm{\alpha})>0,\bm{\Omega}_{ij}^*(\bm{\alpha})+\bm{\Omega}_{ji}^*(\bm{\alpha})< 0,\\
  &     1\leq i \leq j\leq \mathcal{N}_r.
 \end{split}
\end{equation}
Once the MOMIP~(\ref{eq_ex1_MOMIP}) has been solved, we are able to jointly consider
 the tolerance of system uncertainty and the $H_\infty$ attenuation level.
  Furthermore, all the values of $\gamma$, $\rho$, and $\delta$ are determined through optimisation rather than a heuristic assignment.
 These differentiate our MO design approach from conventional ones.

Another advantage of using the proposed methodology is its flexibility to adjust
 controller gains.
Typically, when an $H_\infty$ design is adopted,
``large values'' of controller gains $\bm{K}_i,i=1,2,...,\mathcal{N}_r$, can result from using a very small value of the $H_\infty$ attenuation level $\gamma$.
However, too large values of $\bm{K}_i$ can be impractical in the implementation.
A trial-and-error method that assigns the attenuation level
is  used mostly to avoid this difficulty.
In our MO design approach, we can modify the objective function to reduce the values of controller gains.
In this example,
it is noted that
\begin{equation*}
 \bm{K}_i^*(\bm{\alpha})=\bm{M}_i^*(\bm{\alpha}) [\bm{Z}^*(\bm{\alpha})]^{-1}= \bm{M}_i^*(\bm{\alpha}) \times \frac{adj(\bm{Z}^*(\bm{\alpha}))}{det(\bm{Z}^*(\bm{\alpha}))}
\end{equation*}
where $adj(\cdot)$ and $det(\cdot)$ denote the classical adjoint and the determinant of a square matrix, respectively.
Roughly speaking,
 we may reduce the value of $1/det(\bm{Z}^*(\bm{\alpha}))$
to  avoid obtaining large $\bm{K}_i^*(\bm{\alpha})$.
Therefore, the 3rd objective $1/det(\bm{Z}^*(\bm{\alpha}))$ is added to the original objective function, yielding the  MOMIP
\begin{equation}\label{eq_ex1_MOMIP_aug}
 \begin{split}
  \min_{ \gamma  , \rho , \delta, } \;&
\left[
  \begin{array}{ccc}
     \gamma  & \rho &  \frac{1}{det(\bm{Z}^*(\bm{\alpha}))} \\
  \end{array}
\right]^T
     \\
 \mbox{subject to }   &  \bm{Z}^*(\bm{\alpha})>0,\bm{\Omega}_{ij}^*(\bm{\alpha})+\bm{\Omega}_{ji}^*(\bm{\alpha})< 0,\\
  &  1\leq i \leq j\leq \mathcal{N}_r.
 \end{split}
\end{equation}
Our simulations will demonstrate the effectiveness of this method.

\subsection{Example 2: Bounded-Input Bounded-Output (BIBO) System Design}\label{subsec_output_control}

In real-world control problems, bounded outputs and inputs are usually desired.
Generally speaking, lowering the upper bound on the input and lowering that on the output  are two conflicting objectives.
By lowering the upper bound on the control input, less input energy is used.
As a result,
the output performance can deteriorate, increasing the upper bound on the output.

Consider a linear system
\begin{equation}\label{eq_m3}
    \begin{split}
   \dot{ \bm{x}}(t) { }={ }& \bm{A} \bm{x}(t)+ \bm{B}\bm{u}(t)  \\
     \bm{y}(t) { }={ }    &  \bm{C} \bm{x}(t).\\
    \end{split}
\end{equation}
The following lemma provides LMI conditions for the bounds on the input~$\bm{u}(t)=  \bm{K}\bm{x}(t)$  and output~$ \bm{y}(t)$ of the control system~(\ref{eq_m3}).
\begin{lma}\label{thm_out_control}
If there exist matrices $\bm{Z}$ and $\bm{M}$ such that
\begin{equation}\label{eq_out_control}
   (\bm{A}  \bm{Z},\star)+ (\bm{B}  \bm{M},\star) <0  ,
\bm{Z}_1=
\left[
   \begin{array}{cc}
     1 & \star \\
     \bm{x}(0) & \bm{Z} \\
   \end{array}
 \right]>0,
 \bm{Z}_2=
\left[
   \begin{array}{cc}
    \bm{Z} & \star \\
     \bm{M} & \bar{u}^2 \bm{I} \\
   \end{array}
 \right]>0,\mbox{and }
\bm{Z}_3=
\left[
   \begin{array}{cc}
    \bm{Z} & \star \\
     \bm{C}  \bm{Z} & \bar{y}^2 \bm{I} \\
   \end{array}
 \right]>0
\end{equation}
then the system~(\ref{eq_m3}) is quadratically stabilizable
with
\begin{equation*}
|| \bm{u}(t) || = ||   \bm{K}\bm{x}(t)  ||<\bar{u} \mbox{ and }       || \bm{y}(t) || < \bar{y}
\end{equation*}
 using
the controller gain $\bm{K}=\bm{M} \bm{Z}^{-1}$.
\end{lma}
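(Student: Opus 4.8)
The plan is to use $\bm{P} = \bm{Z}^{-1}$ as a common quadratic Lyapunov matrix and to read each of the four inequalities through a Schur-complement lens. First I would note that $\bm{Z} > 0$, since it appears as a diagonal block of the positive-definite $\bm{Z}_1$ (equivalently $\bm{Z}_2$ or $\bm{Z}_3$); hence $\bm{P} = \bm{Z}^{-1}$ is well defined and positive-definite. Substituting $\bm{M} = \bm{K}\bm{Z}$ into the first inequality $(\bm{A}\bm{Z},\star) + (\bm{B}\bm{M},\star) < 0$ gives $(\bm{A}+\bm{B}\bm{K})\bm{Z} + \bm{Z}(\bm{A}+\bm{B}\bm{K})^T < 0$, and a congruence transformation by $\bm{P}$ on both sides yields the closed-loop Lyapunov inequality $\bm{P}(\bm{A}+\bm{B}\bm{K}) + (\bm{A}+\bm{B}\bm{K})^T\bm{P} < 0$. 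Thus $V(\bm{x}) = \bm{x}^T\bm{P}\bm{x}$ decreases strictly along trajectories of the closed loop $\dot{\bm{x}} = (\bm{A}+\bm{B}\bm{K})\bm{x}$, which establishes quadratic stabilizability and gives $V(\bm{x}(t)) \le V(\bm{x}(0))$ for all $t \ge 0$.

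Next I would convert the remaining three block inequalities into quadratic bounds via the Schur complement about the $\bm{Z}$ (or scalar $1$) block. From $\bm{Z}_1 > 0$ I obtain $1 - \bm{x}(0)^T\bm{Z}^{-1}\bm{x}(0) > 0$, i.e. $V(\bm{x}(0)) < 1$; combined with the previous step, every trajectory is confined to the invariant ellipsoid $\{\bm{x} : \bm{x}^T\bm{P}\bm{x} < 1\}$. From $\bm{Z}_2 > 0$, using $\bm{M}\bm{Z}^{-1}\bm{M}^T = \bm{K}\bm{Z}\bm{Z}^{-1}\bm{Z}\bm{K}^T = \bm{K}\bm{Z}\bm{K}^T$, I get $\bm{K}\bm{Z}\bm{K}^T < \bar{u}^2\bm{I}$; the identical reduction of $\bm{Z}_3 > 0$ gives $\bm{C}\bm{Z}\bm{C}^T < \bar{y}^2\bm{I}$.

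Finally I would bound the input and output over the ellipsoid. For any state with $\bm{x}^T\bm{P}\bm{x} \le 1$, writing $\bm{z} = \bm{P}^{1/2}\bm{x}$ so that $\|\bm{z}\|^2 = \bm{x}^T\bm{P}\bm{x} \le 1$, I have $\|\bm{K}\bm{x}\|^2 = \bm{z}^T\bm{P}^{-1/2}\bm{K}^T\bm{K}\bm{P}^{-1/2}\bm{z} \le \lambda_{\max}(\bm{K}\bm{Z}\bm{K}^T)\,(\bm{x}^T\bm{P}\bm{x}) < \bar{u}^2$, so $\|\bm{u}(t)\| < \bar{u}$; replacing $\bm{K}$ by $\bm{C}$ gives $\|\bm{y}(t)\| < \bar{y}$ in the same way. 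The routine part is the three Schur-complement reductions; the conceptual crux, and the step I would take most care over, is linking the strictly decreasing Lyapunov function to the invariant sublevel set $V < 1$ and then translating the matrix inequalities $\bm{K}\bm{Z}\bm{K}^T < \bar{u}^2\bm{I}$ and $\bm{C}\bm{Z}\bm{C}^T < \bar{y}^2\bm{I}$ into pointwise Euclidean-norm bounds through the maximum-eigenvalue estimate over that ellipsoid.
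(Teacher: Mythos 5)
Your proof is correct and is precisely the standard argument that the paper defers to by citation (Theorems 11 and 12 of the Tanaka--Wang book and (7.16) of Boyd et al.): the Lyapunov/congruence step for quadratic stabilizability, the Schur complement of $\bm{Z}_1$ giving the invariant ellipsoid $\bm{x}^T\bm{Z}^{-1}\bm{x}<1$, and the Schur complements of $\bm{Z}_2,\bm{Z}_3$ combined with the maximum-eigenvalue bound over that ellipsoid to get $\|\bm{u}(t)\|<\bar{u}$ and $\|\bm{y}(t)\|<\bar{y}$. Since the paper provides no in-line proof, your write-up simply supplies the details the cited references contain; no discrepancies to report.
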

\begin{proof}
The reader can refer to Theorems~11 and 12 in~\cite{Tanaka_fuzzy} or (7.16) in ~\cite{Boyd_LMI} for a detailed proof.
\end{proof}

Referring to~(\ref{eq_MOMIP2}), we let
\begin{equation*}
 \bm{f}(\bm{\alpha})=\bm{\alpha}=
 \left[
   \begin{array}{cc}
   \bar{u} & \bar{y} \\
   \end{array}
 \right]^T
 \mbox{ and }
\bm{X}^*(\bm{\alpha}){ }={ } (\bm{Z}^*(\bm{\alpha}),\bm{M}^*(\bm{\alpha})).
\end{equation*}
Based on Theorem~\ref{thm_equi} and Lemma~\ref{thm_out_control},
the MOMIP associated with the BIBO system design can be formulated as
\begin{equation}\label{eq_ex3_MOMIP}
 \begin{split}
  \min_{ \bm{\alpha}} \; &     \bm{\alpha}
     \\
  \mbox{subject to }  &   \bm{Z}_1^*(\bm{\alpha}) >0  , \bm{Z}_2^*(\bm{\alpha}) >0  , \bm{Z}_3^*(\bm{\alpha}) >0,\\
  &      (\bm{A}  \bm{Z}^*(\bm{\alpha}),\star)+ (\bm{B}  \bm{M}^*(\bm{\alpha}),\star) <0
   \end{split}
\end{equation}

Conventionally, both
bounds were assigned in advance, or one bound was minimised with the other bound prescribed.
In contrast, our MO approach
 minimises the upper bounds simultaneously, as shown in~(\ref{eq_ex3_MOMIP}).
 This provides the designer with a broader perspective on optimality, and on Pareto optimality in particular.

Due to space limitation, only two design examples have been  examined.
The proposed methodology can be extended to other optimisation problems with
LMI constraints. More examples with possible MO formulations in the form of~(\ref{eq_MOMIP})
are discussed as follows. In~\cite{DA1} and~\cite{DA2}, the problem of computing the region of attraction (ROA) was investigated.
Suppose that the estimation of the ROA of a system has been formulated as an SOP with LMI constraints and that a robust control law is desired to stabilize
the system.
To apply the proposed MO approach to this scenario, we may maximise the estimated ROA and minimise an $H_\infty$ attenuation level.
In~\cite{anti_wind}, the design of dynamic anti-windup compensators was addressed, and design problems having the same LMI constraints but different objectives were formulated as separate LMIPs.
Among them, one design aimed at the maximisation of the disturbance tolerance, and another design focused on the maximisation of the disturbance attenuation.
For an MO extension, we may consider an MOP in which  the disturbance tolerance and the disturbance attenuation are maximised.

In~\cite{SPR} and~\cite{UnC}, strictly positive real (SPR) controllers and
control systems with ellipsoidal parametric uncertainty were considered, respectively.
LMI approaches were employed for controller synthesis in both studies.
An $H_2$ design that addresses the average system performance was adopted in~\cite{SPR},
and  an  $H_\infty$ design that focuses on
the robustness of a system in
 worst-case scenarios  was used in~\cite{UnC}.
For a mixed $H_2/H_\infty$ design and its MO extension, both $H_2$ and $H_\infty$ performance indices can be minimised simultaneously in our MO approach.
Finally, in~\cite{MPC}, model predictive control (MPC) of nonlinear systems was considered.
For an MO formulation associated with the  MPC problem,
minimising the upper bound on the infinite horizon cost and
 the upper bound on the input energy can be the two objectives.
In this case, further research is needed to address the computational complexity
for an online application.

\section{Proposed HMODE Algorithm}\label{sec_HMODE}

As mentioned previously, there are several
advantages of solving the MOMIP~(\ref{eq_MOMIP}) or its equivalent form~(\ref{eq_MOMIP2})
for an MO controller design.
In this section, we develop an algorithm that can solve~(\ref{eq_MOMIP2}).
Since DE algorithms have been found to be very robust and applied to a large number of SOPs,
their basic structure is used in the proposed algorithm.
To manage the MI constraints, we integrate interior-point methods into our solution searching scheme, resulting in the hybrid algorithm.
Roughly speaking, infeasible solutions ($\lambda^*(\bm{\alpha})>0$, as shown in the proof of Lemma~\ref{lma_1}) or dominated solutions tend to be removed during the algorithm iteration.
Feasible and nondominated solutions are thus remained  so that an APF can be obtained when the algorithm stops.

To facilitate the ensuing discussion,
let $rand_{(a,b)}$ and
$unidrnd_M$ denote
 continuous and discrete uniform random variables over $(a,b)$
and  $\{1,2,...,M \}$, respectively.
For column vectors $\bm{a}$ and $\bm{b}\in \mathbb{R}^N$,
the expression $\bm{a} < \bm{b}$ ($\bm{a} \leq \bm{b}$)  stands for
$[\bm{a}]_i < [\bm{b}]_i$ ($[\bm{a}]_i \leq [\bm{b}]_i$) for $i=1,2,...,N$, where
$[\cdot]_i$ denotes the $i$th entry of a vector.
Furthermore,  $\bm{a} \in [ \bm{\underline{a}},\bm{\bar{a}} ]$ implies that
$\bm{\underline{a}} \leq \bm{a}$ and   $\bm{a} \leq \bm{\bar{a}}$.
For a matrix $\bm{A}$, $[\bm{A}]_{ij}$ denotes
 its $(i,j)$th entry.
We are now in the position to present
the proposed algorithm as follows.\\
\textit{\textbf{HMODE Algorithm}} \\
\textbf{Input}:
\begin{itemize}
  \item  MOMIP~(\ref{eq_MOMIP2}).
  \item $\mathcal{N}_{p}$, the population size;
  \item $\mathcal{N_I}$, the number of iterations;
  \item $\eta_c\in (0,1)$, control parameter of the crossover;
    \item $\eta_d$, control parameter of the crowding distance;
  \item $[\bm{\underline{\alpha}},\bm{\bar{\alpha}} ]$, the range of interest for the non-matrix decision variable $\bm{\alpha}$.
\end{itemize}
\textbf{Step 1) Initialization}:
      \begin{itemize}
          \item []\textbf{Step 1.1)} Let $\Gamma_f=\emptyset$, the set consisting of the best-so-far vectors in the objective function space;
          $\Gamma=\emptyset$, the set consisting of matrix and non-matrix variables in the decision variable space (each point
          in $\Gamma$ corresponds to a vector stored in  $\Gamma_f$); and the algorithm counter $t_c=1$.
          \item []\textbf{Step 1.2)}
          Randomly generate an initial population
          \begin{equation*}
           \bm{\alpha}_1,\bm{\alpha}_2,...,\bm{\alpha}_{\mathcal{N}_{p}} \in [\bm{\underline{\alpha}},\bm{\bar{\alpha}} ].
          \end{equation*}
      \end{itemize}
\textbf{Step 2) Differential Evolution}:\\
\textbf{For} $i=1$ to $\mathcal{N}_{p}$ \textbf{do}

\textbf{If} the Phase-I stage is active, e.g., $t_c\leq \frac{2}{3}\mathcal{N_I}$, \textbf{then}
\begin{itemize}
  \item[] \textbf{Step 2.I.1)} Mutation operation:
      \begin{equation}\label{eq_mutate}
        \bm{v}_i= \bm{\alpha}_{best}+ rand_{(0,1)}(i) ( \bm{\alpha}_j-\bm{\alpha}_k  )
      \end{equation}
     where $j$ and $k$ are distinct integers and randomly selected from $\mathbb{Z}_+\cap [1,\mathcal{N}_{p}]\setminus \{i\}$,
      and $\bm{\alpha}_{best}$ is randomly chosen from $\Gamma$ if $\Gamma\not=\emptyset$. If $\Gamma=\emptyset$, then $\bm{\alpha}_{best}=\bm{\alpha}_{\ell}$ for some integer $\ell$ randomly selected from $\mathbb{Z}_+\cap [1,\mathcal{N}_{p}]\setminus \{i,j,k\}$.
  \item[]   \textbf{Step 2.I.2)} Reflection operation:
\begin{equation}\label{eq_reflect}
    [\bm{v}_{i}']_j=
    \left\{
      \begin{array}{ll}
        \min \{  [\bm{\bar{\alpha}}]_j ,  2  [\bm{\underline{\alpha}}]_j - [\bm{v}_{i}]_j  \} , & \hbox{if }  [\bm{v}_i]_j<  [\bm{\underline{\alpha}}]_j  \\
        \max \{ [\bm{\underline{\alpha}}]_j  ,  2 [\bm{\bar{\alpha}}]_j  -[\bm{v}_{i}]_j  \}, & \hbox{if } [\bm{v}_i]_j >  [\bm{\bar{\alpha}}]_j\\
      {[}\bm{v}_{i}]_j, & \hbox{otherwise.}
      \end{array}
    \right.
\end{equation}
\item[]   \textbf{Step 2.I.3)} Crossover operation:
\begin{equation}\label{eq_crossover}
  [\bm{\varphi}_i]_j=
    \left\{
      \begin{array}{ll}
        [\bm{v}_i']_j, & \hbox{if } rand_{(0,1)}(i,j) \leq \eta_c \hbox{ or }  \\
                    &    j= unidrnd_M(i,j) \\
         {[} \bm{\alpha}_i]_j, & \hbox{otherwise.}
      \end{array}
    \right..
\end{equation}
\item[]   \textbf{Step 2.I.4)} Selection operation: \\
 Assign
\begin{equation*}
\bm{\alpha}_i= \bm{\varphi}_{i}
\end{equation*}
if either
\begin{equation}\label{eq_C1}
\lambda^*( \bm{\varphi}_{i})<0 <  \lambda^*( \bm{\alpha}_{i})
\end{equation}
or
\begin{equation}\label{eq_C2}
\lambda^*( \bm{\varphi}_{i}),\lambda^*( \bm{\alpha}_{i})<0 \mbox{ and }
 \bm{f}(\bm{\varphi}_{i})    < \bm{f}(\bm{\alpha}_{i})
\end{equation}
holds true.
\end{itemize}

\textbf{Else}  (Phase-II stage is active, e.g., $t_c > \frac{2}{3}\mathcal{N_I}$)
\begin{itemize}
  \item[] Assign
      \begin{equation}\label{eq_add_crossover}
        \bm{\alpha}_i= \bm{R}_{i} \bm{\alpha}_i' +(\bm{I}_M-\bm{R}_{i}) \bm{\alpha}_i''
      \end{equation}
where $\bm{I}_M$  represents the $M\times M$ identity matrix,
$\bm{\alpha}_i'$ and $\bm{\alpha}_i''$ are distinct vectors randomly selected from $\Gamma$, and
$\bm{R}_{i} \in \mathbb{R}^{M \times M}$ is defined as
\begin{equation}\label{eq_def_R}
    [\bm{R}_{i}]_{jk}=
\left\{
  \begin{array}{ll}
    rand_{(0,1)}(i,j), & \hbox{if }   1\leq j=k \leq M  \\
    0, & \hbox{otherwise.}
  \end{array}
\right.
\end{equation}
\end{itemize}

\textbf{End If}\\
\textbf{End For}\\
\textbf{Step 3) Update:} Let
\begin{equation*}
    \vartheta(1), \vartheta(2), ..., \vartheta(\kappa)
\end{equation*}
be the indices of all non-matrix decision variables that are feasible, i.e.,
\begin{equation}\label{eq_feasible_update}
  \lambda^*( \bm{\alpha}_{\vartheta(i)} ) <0, i=1,2,...,\kappa.
\end{equation}
 \textbf{For}  $i=1$ to $\kappa$ \textbf{do}

 \textbf{If} $\bm{f}(\bm{\alpha}_{\vartheta(i)}  )$ is not dominated by any vectors in $\Gamma_f$, \textbf{then}
       \begin{itemize}
       \item [] \textbf{Step 3.1)} Remove all
        \begin{equation*}
         \bm{f}(\bm{\alpha})\in \Gamma_f  \mbox{ and }  (\bm{\alpha}, \bm{X}^*(\bm{\alpha})) \in \Gamma
        \end{equation*}
         if  $\bm{f}(\bm{\alpha}_{\vartheta(i)})$ dominates $\bm{f}(\bm{\alpha})$.
        \item [] \textbf{Step 3.2)} Add $\bm{f}(\bm{\alpha}_{\vartheta(i)})$ to $\Gamma_f$ and the associated design parameters $(\bm{\alpha}_{\vartheta(i)}, \bm{X}^*(\bm{\alpha}_{\vartheta(i)}))$ to
$\Gamma$ if
\begin{equation}\label{eq_dis_check}
||\bm{f}(\bm{\alpha}_{\vartheta(i)})- \Gamma_f ||> \eta_d
\end{equation}
where $||\bm{f}(\bm{\alpha}_{\vartheta(i)})- \Gamma_f ||$ represents the Euclidean distance between
the vector $\bm{f}(\bm{\alpha}_{\vartheta(i)})$ and the set $\Gamma_f$.
       \end{itemize}

        \textbf{End If}\\
\textbf{End For}\\
\textbf{Step 4) Stopping Criterion}: Set $t_c=t_c+1$.  If $t_c\leq \mathcal{N_I}$, then go to \textbf{Step 2)}. Otherwise, go to
\textbf{Step 5)}.\\
\textbf{Step 5) Knee Selection}:
Evaluate
\begin{equation}\label{eq_knee}
 \bm{\alpha}^*=\arg_{ \bm{\alpha}} \; \max_{ \bm{f}( \bm{\alpha})\in \Gamma_f }    \;   \prod_{n=1}^N  \frac{\bar{f}_n -  [ \bm{f}(\bm{\alpha})]_n }{\bar{f}_n -\underline{f}_n}
\end{equation}
where
\begin{equation*}
    \bar{f}_n = \max_{\bm{f}(\bm{\alpha})\in \Gamma_f }  [ \bm{f}(\bm{\alpha})]_n \mbox{ and }
   \underline{f}_n = \min_{\bm{f}(\bm{\alpha})\in \Gamma_f }  [ \bm{f}(\bm{\alpha})]_n
\end{equation*}
for  $n=1,2,...,N$.\\
\textbf{Output}:
\begin{itemize}
  \item $\Gamma_f$, the APF.
  \item $\bm{f}(\bm{\alpha}^*)$, the knee selected from $\Gamma_f$.
  \item $(\bm{\alpha}^*,\bm{X}^*(\bm{\alpha}^*))$, the design parameters selected from $\Gamma$.
\end{itemize}

Several points regarding the proposed algorithm are addressed in the following paragraphs.

For the inputs of the proposed HMODE algorithm,
 we often have
 $ \bm{\underline{\alpha}} >0$
for the range of interest~$[\bm{\underline{\alpha}},\bm{\bar{\alpha}} ]$
   as each entry of the vector $\bm{\alpha}$ mostly represents a certain performance index.
In Step~1.1),  the sets $\Gamma_f$ and $\Gamma$
have been used to store best-so-far vectors and the corresponding design parameters, respectively.
The structures of these sets are clear from Step~3.1), i.e.,
 \begin{equation*}
     \Gamma_f \subset  \{  \bm{f}(\bm{\alpha}): \bm{\alpha}\in [\bm{\underline{\alpha}},\bm{\bar{\alpha}} ]  \} \mbox{ and }
         \Gamma =    \{  (\bm{\alpha}, \bm{X}^*(\bm{\alpha})) :\bm{f}(\bm{\alpha}) \in \Gamma_f \}.
   \end{equation*}

It is well-known that DE algorithms have a fast convergence rate but
 they often reach the vicinity of the true Pareto front~\cite{draw_back_DE,Coello}.
 To overcome this problem and maintain the convergence rate,
a two-phase scheme has been proposed in~\cite{two-phase}.
Motivated by the idea of using a two-phase scheme,
 we have adopted the basic structure of conventional DE algorithms at our Phase-I stage,
  and performed a crossover operation~(\ref{eq_add_crossover}) over best-so-far solutions at the Phase-II stage.
The Phase-I and Phase-II stages are used to address the exploration and exploitation of a search space, respectively.
These stages are key to finding feasible and nondominated solutions.

In Step~2.I.1), there are several ways to calculate the mutant, e.g.,
``DE/rand/1'', ``DE/best/1'', ``DE/current-to-best/1'', and ``DE/current-to-rand/1''~\cite{mutate_way1,mutate_way2}.
In general, $\bm{\alpha}_{best}$ in~(\ref{eq_mutate}) should yield good performance in the objective function space
 to achieve fast convergence in the mutating process. For an SOP, ``good performance'' means that $\bm{\alpha}_{best}$ can achieve
a small objective value.
 In the proposed algorithm, solutions in $\Gamma$ corresponding to best-so-far  vectors in $\Gamma_f$
  serve as
 $\bm{\alpha}_{best}$.

To ensure that  mutants satisfy the range of interest $[\bm{\underline{\alpha}},\bm{\bar{\alpha}} ]$, the modification~(\ref{eq_reflect}) in Step~2.I.2)
has been performed~\cite{reflec_op}. This is termed the reflection operation
because  $[\bm{v}_i]_j$ is reflected to $[\bm{v}'_i]_j$   about $[\bm{\underline{\alpha}}]_j$
if $[\bm{v}_i]_j< [\bm{\underline{\alpha}}]_j $
and  about $[\bm{\bar{\alpha}}]_j$
if $[\bm{v}_i]_j> [\bm{\bar{\alpha}}]_j $.
It is evident from~(\ref{eq_mutate}) and~(\ref{eq_reflect})  that the mutant $\bm{v}_i'$ satisfies
$\bm{v}_i' \in [\bm{\underline{\alpha}},\bm{\bar{\alpha}} ]$.
Similarly,
the offspring $\bm{\varphi}_i$ produced from the crossover operation~(\ref{eq_crossover}) in Step~2.I.3)
satisfies
$  \bm{\varphi}_i  \in[\bm{\underline{\alpha}},\bm{\bar{\alpha}} ]$.
In~Step~2.I.4),  the offspring $\bm{\varphi}_i$ is selected if it is better than its predecessor, i.e., either~(\ref{eq_C1}) or~(\ref{eq_C2}) holds true, each of which requires that  $\bm{\varphi}_i$ be feasible, i.e., $\lambda^*(\bm{\varphi}_i)<0$.
By using the selection operation defined in~Step~2.I.4),
conventional DE algorithms for SO optimisation have been extended to address the MOMIP.

\begin{figure}
\centering
  \includegraphics[width=4cm]{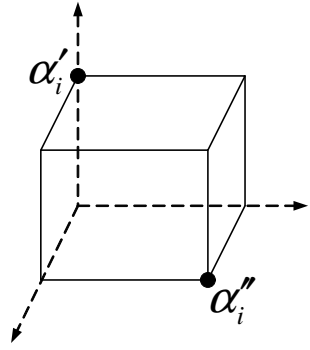}\\
  \caption{Illustration of the cube in~(\ref{eq_cubic}) for $M=3$. In this case, the offspring $\bm{\alpha}_i$ of
$\bm{\alpha}_i'$ and $\bm{\alpha}_i''$
  produced by~(\ref{eq_add_crossover}) is a point within this cube.}\label{fig_cubic}
\end{figure}

At the Phase-II stage,
we explore the neighborhood of solutions in $\Gamma$ using~(\ref{eq_add_crossover})
 in order to
 reach the true Pareto front.
Due to the design of $[\bm{R}_i]_{jk}$ in~(\ref{eq_def_R}),
the newly produced offspring $\bm{\alpha}_i$ in~(\ref{eq_add_crossover})
can be regarded as a point randomly chosen from the cube
\begin{equation}\label{eq_cubic}
    \prod_{j=1}^M   [  \min \{   [\bm{\alpha}_i']_j, [\bm{\alpha}_i'']_j       \}, \max\{   [\bm{\alpha}_i']_j, [\bm{\alpha}_i'']_j         \}               ].
\end{equation}
Fig.~\ref{fig_cubic} shows the graphical illustration of the cube defined in~(\ref{eq_cubic})
for $M=3$.
Analogous to the property that $\bm{v}_i'$ and $\bm{\varphi}_i$ are always confined within $[\bm{\underline{\alpha}},\bm{\bar{\alpha}} ]$, we have  $\bm{\alpha}_i\in [\bm{\underline{\alpha}},\bm{\bar{\alpha}} ]$ for all $i$ at the Phase-II stage.
We conclude that
\begin{equation*}
\bm{v}_i'{ }={ }
\bm{v}_i'(t_c)\in [\bm{\underline{\alpha}},\bm{\bar{\alpha}} ],  \bm{\varphi}_i=\bm{\varphi}_i(t_c)\in [\bm{\underline{\alpha}},\bm{\bar{\alpha}} ], \mbox{ and }
   \bm{\alpha}_i{ }={ }
   \bm{\alpha}_i(t_c)
 \in [\bm{\underline{\alpha}},\bm{\bar{\alpha}} ]
\end{equation*}
for all $i$ and $t_c$.

\begin{figure}
\centering
  \includegraphics[width=6cm]{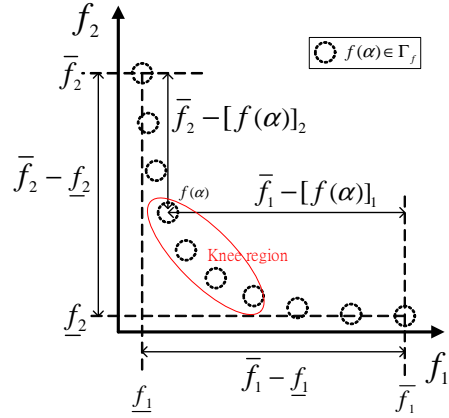}\\
  \caption{Diagram of (\ref{eq_knee}) with $N=2$. For the $n$th objective, the value $\bar{f}_n -  [\bm{f}(\bm{\alpha})]_n $
  represents the improvement achieved by the solution $\bm{\alpha}$, and  the value  $\bar{f}_n -\underline{f}_n$
  represents the maximum improvement with respect to  $\Gamma_f$.}\label{fig_knee_illu}
\end{figure}

In Step~3), the update process is active only if feasible solutions have been found, i.e., the conditions~(\ref{eq_feasible_update})  hold true for a positive integer~$\kappa$. To maintain a manageable size of the APF, a simple mechanism~(\ref{eq_dis_check}) has been introduced: if $\bm{f}(\bm{\alpha})$ is not dominated by
any vectors in~$\Gamma_f$
and poses a distance away from the set~$\Gamma_f$, then
$\bm{f}(\bm{\alpha})$ and $(\bm{\alpha},\bm{X}^*(\bm{\alpha}))$
 enter the external archives $\Gamma_f$ and $\Gamma$, respectively.
In Step~4),
once $t_c$, the counter of the number of iterations, has exceeded the prescribed $\mathcal{N}_\mathcal{I}$,
the DE in Step~2) will no longer be executed, and
 a set of design parameters becomes available.
The aim of the remaining work is to choose the most reasonable design for the control system.

In Step~5), the final design has been selected by the optimisation process~(\ref{eq_knee}) that fully exploits the knowledge extracted from the APF $\Gamma_f$.
Since knee solutions are often preferred,
 the chosen design should be able to reflect this preference.
To illustrate the underlying idea of using~(\ref{eq_knee}),
Fig.~\ref{fig_knee_illu} presents a case in which two objectives are involved. In such a case, the quantities
\begin{equation}\label{eq_two_quan}
\frac{\bar{f}_1 -  [\bm{f}(\bm{\alpha})]_1 }{\bar{f}_1 -\underline{f}_1}  \mbox{ and } \frac{\bar{f}_2 -  [\bm{f}(\bm{\alpha})]_2 }{\bar{f}_2 -\underline{f}_2}
\end{equation}
 can be interpreted as the normalized improvement with respect to the first and second objectives, respectively.
According to~(\ref{eq_knee}), the design that maximises the ``overall improvement'' will be selected, in which the overall improvement is defined
as the the product of the separate improvement of each objective, i.e., the product of the two terms in~(\ref{eq_two_quan}).
To some extent, the optimisation process~(\ref{eq_knee}) aims to create a win-win situation
by searching for the ``optimal'' design that sacrifices each objective to some degree in order to do well on the whole.

In the end of the proposed HMODE algorithm, we output valuable information $\Gamma_f$ and  $\bm{f}(\bm{\alpha}^*)$, allowing
the designer of control systems to  have a good understanding of the chosen design parameters $(\bm{\alpha}^*,\bm{X}^*(\bm{\alpha}^*))$.

\section{Numerical Examples}\label{sec_sim}

This section provides a simulation study of the proposed methodology.
The values $\mathcal{N}_{p}=100,\mathcal{N_I}=200, \eta_c=0.2,$ and $\eta_d=0.05$
have been chosen as the inputs to the HMODE algorithm.
The proposed algorithm has been applied to solving the MOMIPs in the design examples in Section~\ref{sec_example}.

\subsection{Example 1: Robust Control Design}

Consider the chaotic Lorenz system~\cite{uncertain4,Uncertain_fuzzy}
\begin{equation}\label{eq_Lorenz}
\begin{split}
  \dot{x}_1(t) { }={ } & -\sigma x_1(t)+ \sigma x_2(t)+ u(t)+0.1 w_1(t) \\
  \dot{x}_2(t) { }={ } & r x_1(t)- x_2(t) - x_1(t)x_3(t)  +0.1 w_2(t) \\
    \dot{x}_3(t) { }={ } & x_1(t)x_2(t) -b x_3(t)     +0.1 w_3(t) \\
    \bm{y}(t) { }={ } &
    \left[
      \begin{array}{ccc}
   x_1(t)+u(t) & x_2(t)+u(t) & x_3(t)+u(t) \\
      \end{array}
    \right]^T
\end{split}
\end{equation}
where $\sigma,r,$ and $b$ are uncertain parameters with
the nominal values
\begin{equation}\label{eq_nominal}
\sigma \approx 10, r \approx28, \mbox{ and } b\approx 8/3
\end{equation}
 which are known to the designer.
The nonlinear system~(\ref{eq_Lorenz}) has been interpolated by a fuzzy system in~(\ref{eq_m2})
with $\mathcal{N}_r=2$ and~\cite{Uncertain_fuzzy}
\begin{equation}\label{eq_ex1_fuzzy}
{\tiny
    \begin{split}
  \xi_1(x_1(t)){ }={ }&  \max\{\; \min\{\;-\frac{1}{50}\times(x_1(t)-30), 1\}, 0\;\}, \xi_2(x_1(t)){ }={ } \max\{\;  \min\{\;\frac{1}{50}\times(x_1(t)+20),  1\}, 0\;\}\\
      \bm{A}_1  { }={ } &
     \left[
       \begin{array}{ccc}
         -10 & 10 & 0 \\
         28 & -1 & 20 \\
         0 & -20 & -8/3 \\
       \end{array}
     \right], \bm{A}_2 =
     \left[
       \begin{array}{ccc}
         -10 & 10 & 0 \\
         28 & -1 & -30 \\
         0 & 30 & -8/3 \\
       \end{array}
     \right],
   \bm{B}_{1_1}  { }={ }    \bm{B}_{1_2}=
        \left[
       \begin{array}{ccc}
         0.1 & 0 & 0 \\
         0 & 0.1 & 0 \\
         0 & 0 & 0.1 \\
       \end{array}
     \right],
   \bm{B}_{2_1}  =   \bm{B}_{2_2}=
        \left[
       \begin{array}{c}
         1  \\
         0  \\
         0  \\
       \end{array}
     \right],   \\
  \bm{C}_{1}  { }={ } &    \bm{C}_{2}=
        \left[
       \begin{array}{ccc}
         1 & 0 & 0 \\
         0 & 1 & 0 \\
         0 & 0 & 1 \\
       \end{array}
     \right],
   \bm{D}_{1}  =   \bm{D}_{2}=
        \left[
       \begin{array}{c}
         1  \\
         1  \\
         1  \\
       \end{array}
     \right], \mbox{and }
   \bm{H}_{1}  { }={ }    \bm{H}_{2}=
        \left[
       \begin{array}{ccc}
    -3  &  3 &        0  \\
    8.4  & 0  &     0   \\
         0  &       0 &  -0.8
       \end{array}
     \right].
    \end{split}
    }
\end{equation}

\begin{figure}
\centering
\begin{equation*}
\begin{array}{cc}
 \includegraphics[width=6cm]{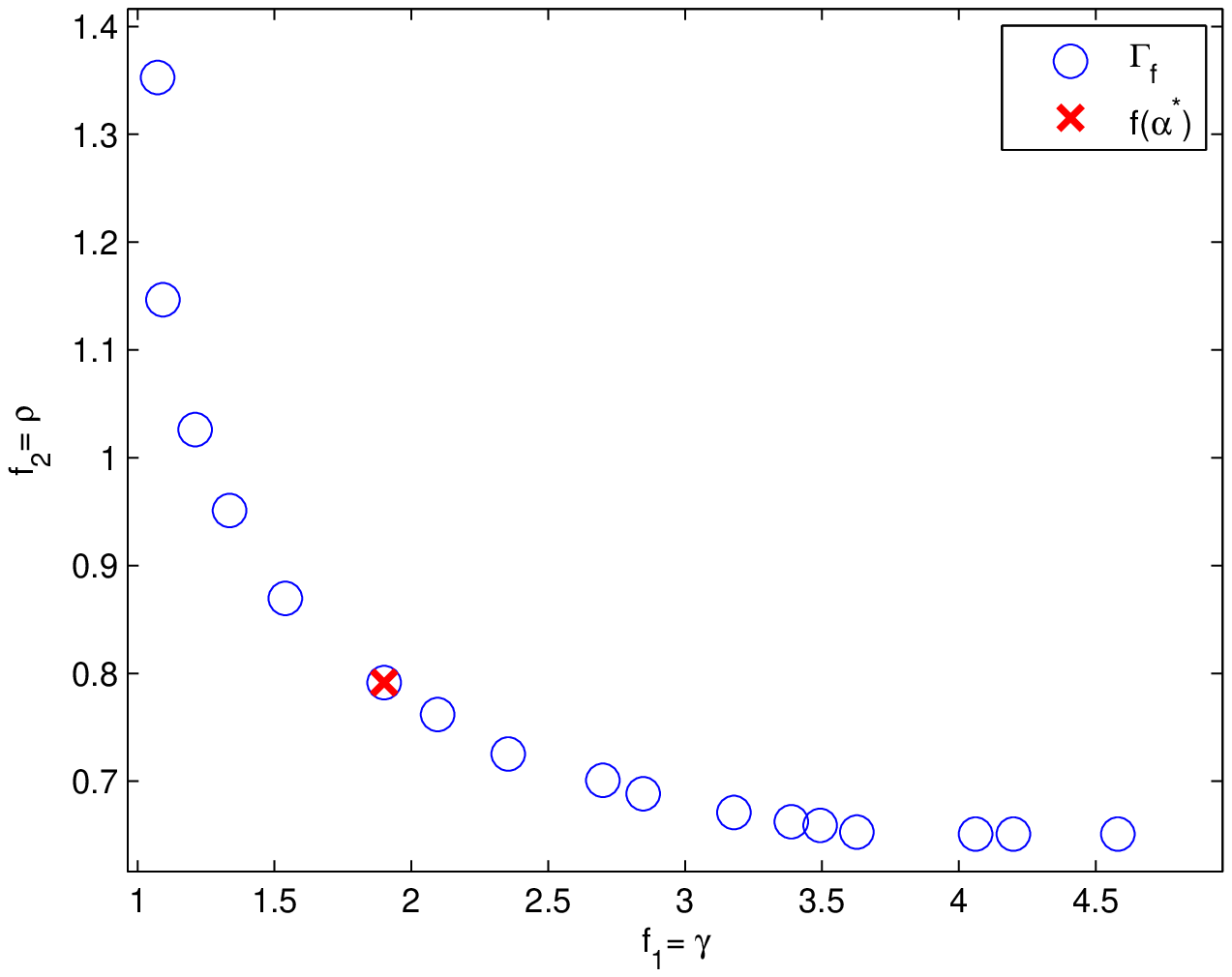} & \includegraphics[width=6cm]{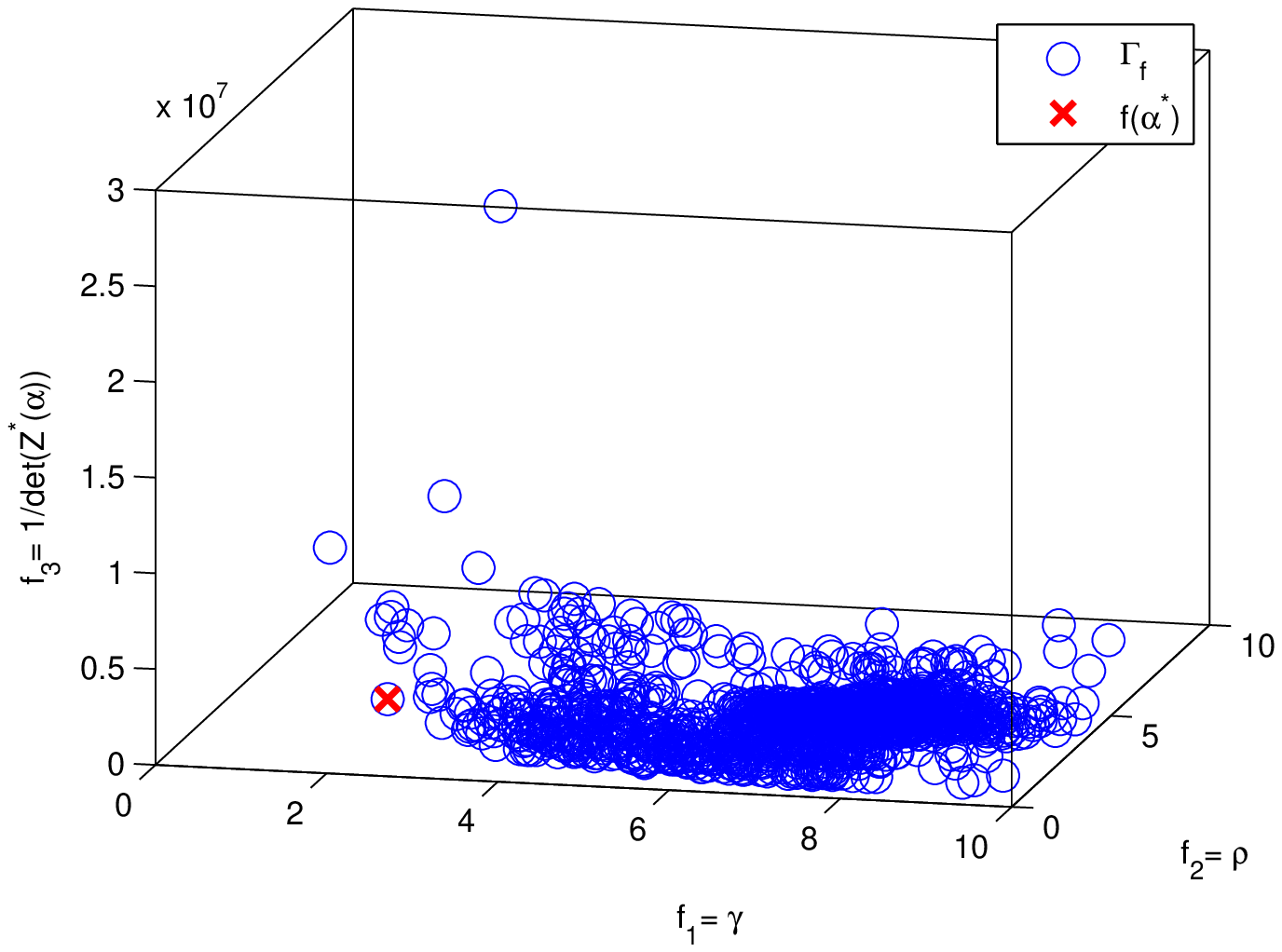} \\
  \mbox{(a)} & \mbox{(b)}
\end{array}
\end{equation*}
\caption{APFs in Example 1: (a) the APF resulting from solving~(\ref{eq_ex1_MOMIP}) and (b) the APF resulting from solving~(\ref{eq_ex1_MOMIP_aug}). Both cases show that the proposed HMODE algorithm can yield a reasonable tradeoff design represented by a knee $\bm{f}(\bm{\alpha}^*)$ of the APF~$\Gamma_f$.
}\label{fig_ex1_PF}
\end{figure}

The range of interest
\begin{equation*}
\bm{\alpha}
\in
    [0.5 ,5] \times  [0.5 ,5] \times  [0.01,5]
\end{equation*}
has been chosen, i.e.,
\begin{equation*}
    \bm{\underline{\alpha}}=
    \left[
                              \begin{array}{ccc}
                                0.5 & 0.5 & 0.01 \\
                              \end{array}
                            \right]^T
   \mbox{ and }
      \bm{\bar{\alpha}}=
    \left[
                              \begin{array}{ccc}
                                5 & 5 & 5 \\
                              \end{array}
                            \right]^T.
\end{equation*}
Solving~(\ref{eq_ex1_MOMIP}) produced
the APF in Fig.~\ref{fig_ex1_PF}(a).
The solution
\begin{equation*}
\bm{\alpha}^*=
\left[
  \begin{array}{ccc}
 1.9009  &   0.7914 &   0.1585 \\
  \end{array}
\right]^T
\end{equation*}
has been selected, which corresponds to the gain matrices
\begin{equation}\label{eq_ex1_K_ori}
\bm{K}_1^*(\bm{\alpha}^*)\approx \bm{K}_2^*(\bm{\alpha}^*)
  \approx{ }
\left[
  \begin{array}{ccc}
-1785 &  -526.5  &  18.3 \\
  \end{array}
\right].
\end{equation}

Several points should be addressed here.
First, while the values $\delta=1$ and $\rho=1$ were heuristically prescribed in~\cite{Uncertain_fuzzy},
they were evaluated through an optimisation process
in our MO approach, which should be better in general.
Second, the shape of the APF in Fig.~\ref{fig_ex1_PF}(a) shows
how the tolerance of uncertainty and the attenuation level affect each other:
the objectives are dependent and conflicting, and
the designer must trade off any improvement of  one objective with the deterioration of the other objective.
As shown in Fig.~\ref{fig_ex1_PF}(a),
the proposed approach can provide a reasonable design that
corresponds to the nondominated vector $\bm{f}(\bm{\alpha}^*)$ lying in the knee region of the APF.
This illustrates the validity of using~(\ref{eq_knee}).
Finally, the resulting gain matrices in~(\ref{eq_ex1_K_ori}) can be too large, which is often undesired in practical applications.

To  avoid obtaining large $\bm{K}_i^*(\bm{\alpha})$, the MOMIP~(\ref{eq_ex1_MOMIP_aug})
has been solved, yielding the new solution
\begin{equation*}
\bm{\alpha}^*=
\left[
  \begin{array}{ccc}
 2.0900  &  2.6961  &  0.1469 \\
  \end{array}
\right]^T
\end{equation*}
and the gain matrices
\begin{equation}\label{eq_ex1_gain}
{\small
\bm{K}_1^*(\bm{\alpha}^*){ }={ }
\left[
  \begin{array}{ccc}
-105.4586 & -34.9714  &  1.7917 \\
  \end{array}
\right], \mbox{ and }
\bm{K}_2^*(\bm{\alpha}^*){ }={ }
\left[
  \begin{array}{ccc}
-105.4506 & -34.9705  &  1.8222\\
  \end{array}
\right].
}
\end{equation}
The values of $\bm{K}_1^*(\bm{\alpha}^*)$ and $\bm{K}_2^*(\bm{\alpha}^*)$ in~(\ref{eq_ex1_gain}) have been reduced  compared with those in~(\ref{eq_ex1_K_ori}).
Fig.~\ref{fig_ex1_PF}(b) shows the corresponding APF.

For the chaotic system~(\ref{eq_Lorenz}), we have chosen
\begin{equation*}
\begin{split}
\bm{x}(0) { }={ } &
\left[
  \begin{array}{ccc}
    x_1(0) & x_2(0) & x_3(0) \\
  \end{array}
\right]^T=
\left[
  \begin{array}{ccc}
  0 & 0 & 0 \\
  \end{array}
\right]^T,
    \sigma{ }={ } \sigma(t)=10+\sin(t),r =  0.8\times 28 , b= 1.1\times 8/3
 \\
        w_1(t){ }={ }  &  rand_{(-0.2,0.8)}(t),w_2(t)=rand_{(-0.7,1)}(t) ,
     w_3(t)   { }={ }   rand_{(-0.1,0.3)} (t)
\end{split}
\end{equation*}
and used the control law
\begin{equation}\label{eq_ut_ex1}
u(t) =  \xi_1(x_1(t))\bm{K}_1^*(\bm{\alpha}^*)\bm{x}(t)+ \xi_2(x_1(t))\bm{K}_2^*(\bm{\alpha}^*) \bm{x}(t)
\end{equation}
 in our simulations.
Figs.~\ref{fig_ex1}(a), (b), and (c) show the uncontrolled states,
the controlled states using $\bm{K}_i^*(\bm{\alpha}^*)$ in~(\ref{eq_ex1_K_ori}), and
the controlled states using $\bm{K}_i^*(\bm{\alpha}^*)$ in~(\ref{eq_ex1_gain}), respectively.
It is clear that
the uncontrolled states are bounded away from  zero.
In contrast, we have
the controlled states
$x_1(t)\approx x_2(t)\approx x_3(t)\approx 0.$
The system states in this example have been effectively
controlled, illustrating the validity of the proposed MO design.
Furthermore,
referring to Figs.~\ref{fig_ex1}(b) and (c) associated with the controller gains in~(\ref{eq_ex1_K_ori}) and~(\ref{eq_ex1_gain}), respectively,
we conclude that
 our MO approach can  reduce ``the values of gain matrices'' by adding an additional term to the objective function
 as in~(\ref{eq_ex1_MOMIP_aug})
 while maintaining excellent performance of the controlled system.

\begin{figure*}
\centering
\begin{equation*}
\begin{array}{ccc}
 \includegraphics[width=5.6cm]{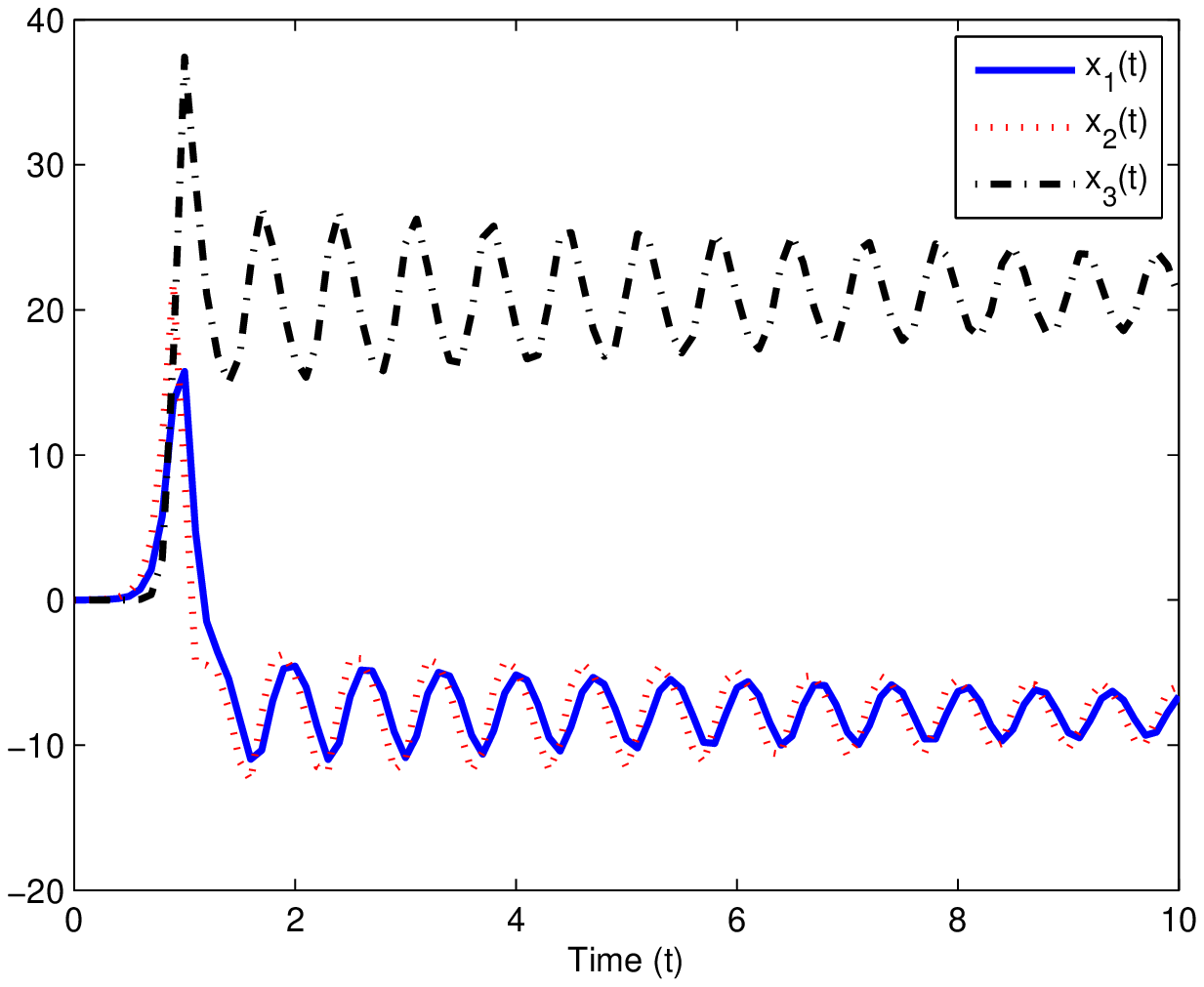}  &\includegraphics[width=5.6cm]{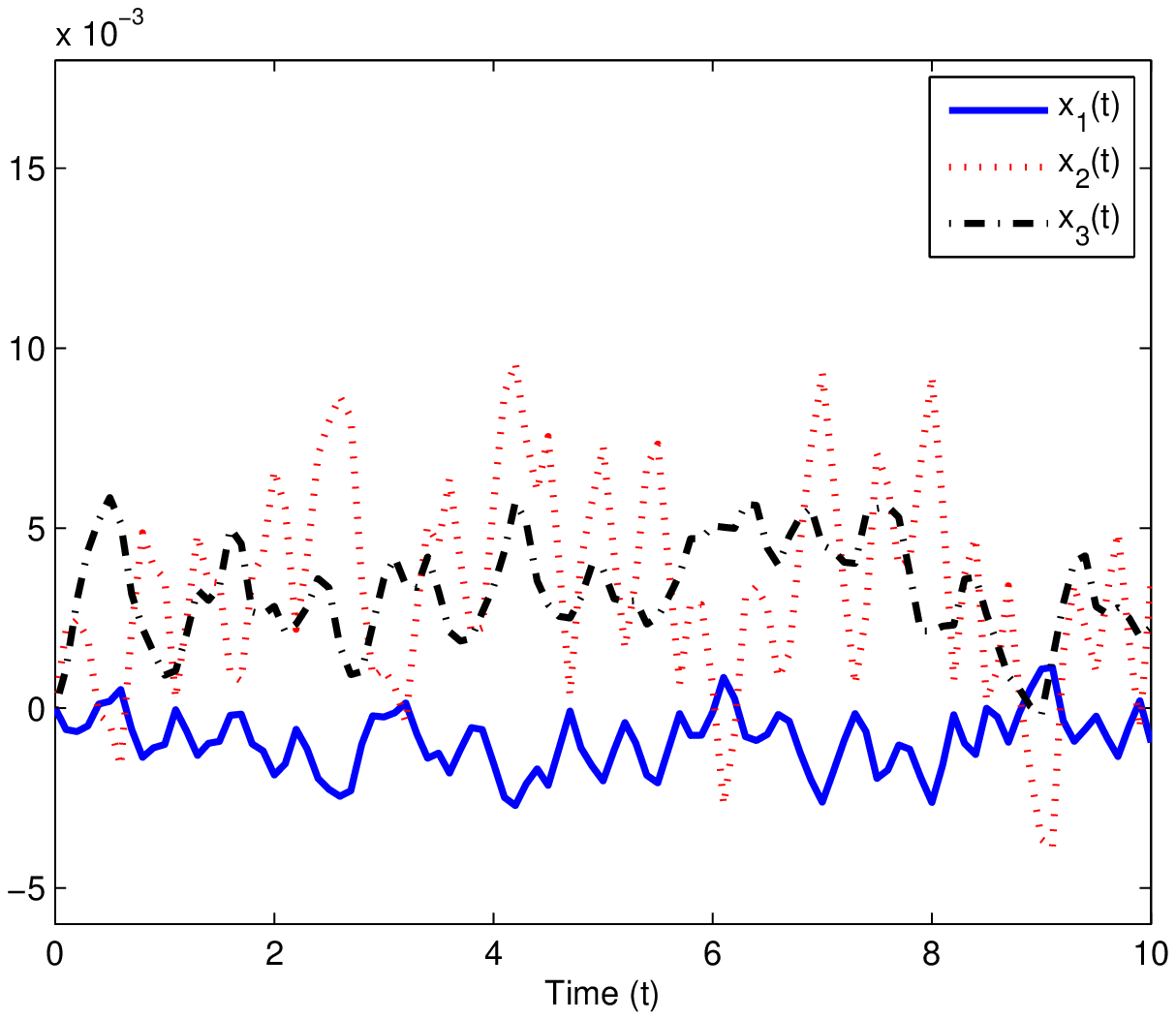} & \includegraphics[width=5.6cm]{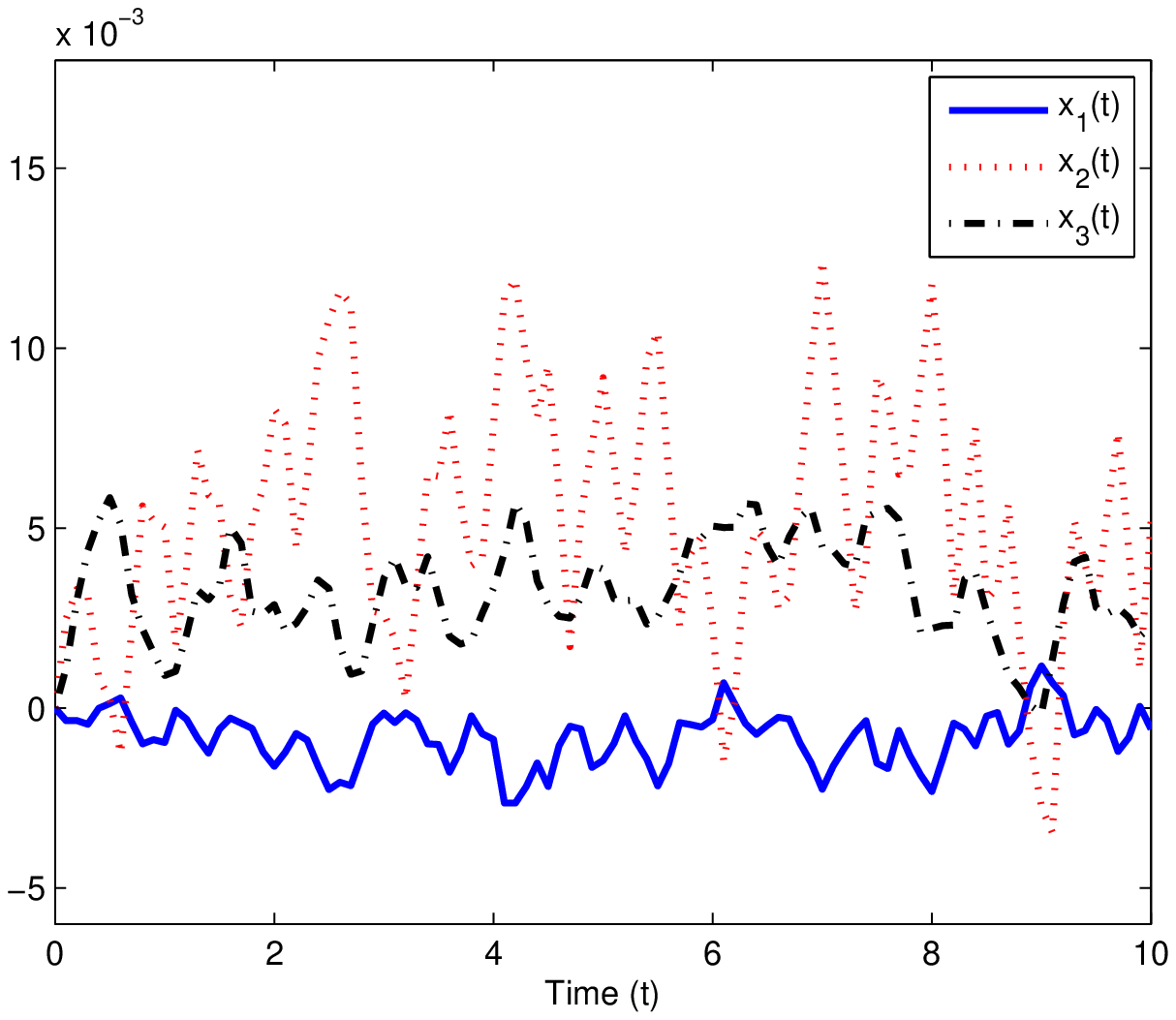} \\
  \mbox{(a)} & \mbox{(b)} & \mbox{(c)}
\end{array}
\end{equation*}
\caption{Comparison between  uncontrolled and controlled  states in Example~1: (a) $u(t)=0$; (b) $u(t)$ constructed according to~(\ref{eq_ut_ex1}) with $\bm{K}_i^*(\bm{\alpha})$ in~(\ref{eq_ex1_K_ori});
and (c) $u(t)$ constructed according to~(\ref{eq_ut_ex1}) with $\bm{K}_i^*(\bm{\alpha})$ in~(\ref{eq_ex1_gain}).
Note that the scale along the y-axis in~(b) and (c) is $10^{-3}$.
While the uncontrolled states in~(a) are bounded away from the zero,
the controlled states in~(b) and~(c) vibrate around the zero with very small amplitudes, indicating that
the proposed MO design approach can yield feasible control laws.
Although the differences between~(b) and~(c)
are insignificant, the associated controller gains in~(c) are smaller than those in~(b).
}\label{fig_ex1}
\end{figure*}

\subsection{Example 2: BIBO System Design}

\begin{figure}
  \centering
  \includegraphics[width=6cm]{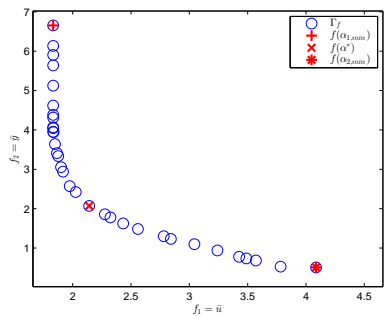}\\
  \caption{APF $\Gamma_f$ in Example 2. Three designs, denoted by  $\bm{\alpha}_{1,min},\bm{\alpha}^*$
and $\bm{\alpha}_{2,min}$, have been selected for comparison. The designs $\bm{\alpha}_{1,min}$
and $\bm{\alpha}_{2,min}$ aim to minimise the upper bounds on  the input norm $||\bm{u}(t)||$ and the output norm $||y(t)||$, respectively.
In contrast, the proposed MO approach yields the design $\bm{\alpha}^*$ lying in the knee region of the APF,
which corresponds to a tradeoff solution that  sacrifices both objectives to some degree in order to do well on the whole.
}\label{fig_ex3_PF}
\end{figure}

For the linear system in~(\ref{eq_m3}),
the following values have been chosen in our simulations:
\begin{equation*}
\bm{x}(0){ }={ }
\left[
  \begin{array}{c}
    3 \\
    -4 \\
  \end{array}
\right],
\bm{A}= \left[
      \begin{array}{cc}
        -10  & -5 \\
       -4   & -1.2 \\
      \end{array}
    \right],
\bm{B}{ }={ }\left[
      \begin{array}{cc}
        3  & 1 \\
       0   & 2 \\
      \end{array}
    \right],\mbox{and }
 \bm{C}=
 \left[
   \begin{array}{cc}
     1 & 0.7 \\
   \end{array}
 \right].
\end{equation*}
 Fig.~\ref{fig_ex3_PF} presents the APF
resulting from solving~(\ref{eq_ex3_MOMIP}).
The APF shows  that
 better performance (small $\bar{y}$) comes from the cost of using more input energy (large $\bar{u}$).
The nondominated vector
\begin{equation*}
\bm{f}(\bm{\alpha}^*)=\bm{\alpha}^*=
\left[
  \begin{array}{cc}
     2.1412  &  2.0705 \\
  \end{array}
\right]^T
\end{equation*}
has been selected, leading to the controller gain
\begin{equation*}
\bm{K}^*(\bm{\alpha}^*)=
\left[
  \begin{array}{cc}
   -0.2037 &   0.0363 \\
    0.0168 &  -0.4106 \\
  \end{array}
\right].
\end{equation*}
We compared the selected design to two ``extreme'' designs denoted by
\begin{equation*}
\bm{\alpha}_{1,min}{ }={ }
\left[
  \begin{array}{cc}
     1.8324 &   6.6537 \\
  \end{array}
\right]^T \mbox{ and }
\bm{\alpha}_{2,min}{ }={ }
\left[
  \begin{array}{cc}
   4.0862  &  0.5084 \\
  \end{array}
\right]^T
\end{equation*}
 as shown in Fig.~\ref{fig_ex3_PF}.
 The corresponding gain matrices are
 \begin{equation*}
\bm{K}^*(\bm{\alpha}_{1,min})=
\left[
  \begin{array}{cc}
   -0.0813 &   0.1873 \\
    0.0979 &  -0.2918 \\
  \end{array}
\right] \mbox{ and }
\bm{K}^*(\bm{\alpha}_{2,min})=
\left[
  \begin{array}{cc}
   -3.0274 &  -2.3897 \\
    0.0858 &  -0.7900 \\
  \end{array}
\right].
\end{equation*}

Table~\ref{tab_cpr} summarizes the performance of the three designs, i.e., $\bm{\alpha}^*,\bm{\alpha}_{1,min}$,
and $\bm{\alpha}_{2,min}$.
As expected, the design $\bm{\alpha}_{1,min}$ leads to the smallest $\max_t || \bm{u}(t) ||$
because it aims at minimising the upper bound $\bar{u}$ on  $||\bm{u}(t)||$.
The design $\bm{\alpha}_{2,min}$ leads to the smallest $\max_t |y(t)|$
as this design focuses on the minimisation of the upper bound $\bar{y}$ on  $||y(t)||$.
In contrast, the proposed approach provides a trade-off design that
sacrifices both objectives to some degree in order to do well on the whole.

\begin{table}
  \centering
  \caption{Performance Comparison of the Three Tradeoff Designs}\label{tab_cpr}
  \begin{tabular}{cccc}
    \hline
    \hline
    Performance Index$\backslash$ Design  & $\bm{\alpha}_{1,min}$ &  $\bm{\alpha}^*$ & $\bm{\alpha}_{2,min}$ \\
        \hline
   $\max_t ||\bm{u}(t)||$ &    1.7674  &   1.8906  &   3.6689 \\
       \hline
   $\max_t |y(t)|$ &     1.0476  &   0.8751  &   0.2000 \\
    \hline
    \hline
  \end{tabular}
\end{table}

\section{Conclusion}\label{sec_con}

In this paper,  linear matrix inequality (LMI) approaches have been extended to address controller design problems with multiple objectives.
A hybrid multiobjective differential evolution (HMODE) algorithm, a type of multiobjective evolutionary algorithms (MOEAs), has been proposed.
Although LMI approaches and the development of MOEAs are mature,
we have integrated them for multiobjective (MO) controller designs, which has not been thoroughly investigated in the literature.
Several benefits of using
the integration are summarized as follows.
First, Pareto optimality for MO controller designs can be achieved in a systematic way.
This is done by
solving an MO matrix inequality problem (MOMIP) using the HMODE algorithm.
Second, a system designer can have a clear view on how objectives affect each other.
This comes from the availability of
an approximated Pareto front (APF) produced by solving the MOMIP.
 In this case, a broad perspective on the optimality can be obtained as compared to conventional SO designs.
Third, if the shape of the APF is bent, then it is possible to select the ``optimal''  trade-off controller: all objectives are sacrificed to some degree
but the system performance is improved on the whole, i.e., a win-win situation can be created.
Fourth, design parameters can be selected
through optimisation  without using a heuristic assignment.
Finally, our MO design approach can adjust the values of controller gains by adding an additional term to the cost function, showing its flexibility.


\begin{thebibliography}{10}
\providecommand{\url}[1]{#1}
\csname url@samestyle\endcsname
\providecommand{\newblock}{\relax}
\providecommand{\bibinfo}[2]{#2}
\providecommand{\BIBentrySTDinterwordspacing}{\spaceskip=0pt\relax}
\providecommand{\BIBentryALTinterwordstretchfactor}{4}
\providecommand{\BIBentryALTinterwordspacing}{\spaceskip=\fontdimen2\font plus
\BIBentryALTinterwordstretchfactor\fontdimen3\font minus
  \fontdimen4\font\relax}
\providecommand{\BIBforeignlanguage}[2]{{%
\expandafter\ifx\csname l@#1\endcsname\relax
\typeout{** WARNING: IEEEtran.bst: No hyphenation pattern has been}%
\typeout{** loaded for the language `#1'. Using the pattern for}%
\typeout{** the default language instead.}%
\else
\language=\csname l@#1\endcsname
\fi
#2}}
\providecommand{\BIBdecl}{\relax}
\BIBdecl

\bibitem{app1}
She, Y., Baran, M.E., She, X.: `Multiobjective control of {PEM} fuel cell
  system with improved durability', \emph{IEEE Trans. Sustainable Energy}, 2013,
  4, (1), pp. 127--135


\bibitem{app2}
Chipperfield, A.J., Bica, B., Fleming,  P.J.: `Fuzzy scheduling control of a
  gas turbine aero-engine: {A} multiobjective approach', \emph{{IEEE} Trans.
  Ind. Electron.}, 2002, 49, (3), pp. 536--548

\bibitem{appli5}
Wojsznis, W., Mehta, A., Wojsznis, P., Thiele, D., Blevins, T.:
`Multi-objective optimization for model predictive control', \emph{ISA
  Transactions}, 2007, 46, (3), pp. 351--361

\bibitem{appli1}
Wang, Z., Zeng, H., Ho, D.W.C., Unbehauen, H.: `Multiobjective control of a
  four-link flexible manipulator: {A} robust {H}$_\infty$ approach',
  \emph{{IEEE} Trans. Control Syst. Technol.}, 2002, 10, (6), pp. 866--875

\bibitem{appli6}
Yang, C., Zhang, Q.: `Multiobjective control for {T-S} fuzzy singularly
  perturbed systems', \emph{{IEEE} Trans. Fuzzy Syst.}, 2009, 17, (1), pp.
  104--115

\bibitem{GA_based}
Lin, C.-L., Jan, H.-Y., Shieh, N.-C.: `{GA}-based multiobjective {PID}
  control for a linear brushless {DC} motor', \emph{{IEEE/ASME} Trans.
  Mechatronics}, 2003, 8, (1), pp. 56--65

\bibitem{appli2}
Ebihara, Y., Hagiwara, T.: `New dilated {LMI} characterizations for
  continuous-time control design and robust multiobjective control'.
  \emph{Proc. American Control Conference}, Anchorage, AK, May 2002, pp.
  47--52


\bibitem{appli4}
Chen, H., Guo, K.: `An {LMI} approach to multiobjective {RMS} gain control
  for active suspensions'. \emph{Proc. American Control Conference},
  Arlington, VA, Jun. 2001, pp. 2646--2651

\bibitem{C_2_1}
Shimomura, T., Fujii, T.: `Multiobjective control design via successive
  over-bounding of quadratic terms'. in \emph{Proc. IEEE Conference on
  Decision and Control}, Sydney, Australia, Dec. 2000, pp. 2763--2768

\bibitem{decay_rate}
Tanaka, K., Hori, S., Wang, H.O.: `Multiobjective control of a vehicle with
  triple trailers', \emph{{IEEE/ASME} Trans. Mechatronics}, 2002, 7, (3), pp.
  357--368

\bibitem{C_2_2}
Tseng, C.-S., Chen, B.-S.: `Multiobjective {PID} control design in uncertain
  robotic systems using neural network elimination scheme', \emph{{IEEE}
  Trans. Syst., Man, Cybern. {A}}, 2001, 31, (6), pp. 632--644

\bibitem{C3_2}
Heo, J.S., Lee, K.Y., Garduno-Ramirez, R.: `Multiobjective control of power
  plants using particle swarm optimization techniques', \emph{{IEEE} Trans.
  Energy Convers.}, 2006, 21, (2), pp. 552--561

\bibitem{decom1}
Lu, J., DePoyster, M.: `Multiobjective optimal suspension control to achieve
  integrated ride and handling performance', \emph{{IEEE} Trans. Control Syst.
  Technol.}, 2002, 10, (6), pp. 807--821

\bibitem{decom2}
Elmusrati, M., Jantti, R.,  Koivo, H.N.: `Multiobjective distributed power
  control algorithm for {CDMA} wireless communication systems', \emph{{IEEE}
  Trans. Veh. Technol.}, 2007, 56, (2), pp. 779--788

\bibitem{decom3}
Yang, C.-Y., Chen, B.-S., Jian, C.-Y.: `Robust two-loop power control for
  {CDMA} systems via multiobjective optimization', \emph{{IEEE} Trans. Veh.
  Technol.}, 2012, 61, (5), pp. 2145--2157

\bibitem{decom}
Audet, C., Savard, G. Zghal, W.: `Multiobjective optimization through a
  series of single-objective formulations', \emph{SIAM J. Optim.}, 2008, 19,
 (1), pp. 188--210

\bibitem{Coello}
Coello, C.A.C., {Van Veldhuizen}, D.A., Lamont, G.B.: `Evolutionary
  Algorithms for Solving Multi-objective Problems' (Kluwer Academic, New York, 2002)



\bibitem{appli3}
Lin, C.-T., Chung, I.-F.: `A reinforcement neuro-fuzzy combiner for
  multiobjective control', \emph{{IEEE} Trans. Syst., Man, Cybern. {B}},
  1999, 29, (6), pp. 726--744

\bibitem{SO_1}
Abbaszadeh, M., Marquez, H.J.: `{LMI} optimization approach to robust
  {H}$_\infty$ observer design and static output feedback stabilization for
  discrete-time nonlinear uncertain systems', \emph{International Journal of
  Robust and Nonlinear Control}, 2009, 19, (3), pp. 313--340

\bibitem{SO_2}
Abbaszadeh, M., Marquez, H.J.: `Nonlinear robust {H}-infinity filtering for a class of uncertain
  systems via convex optimization', \emph{Journal of Control Theory and
  Applications}, 2012, 10, (2), pp. 152--158

\bibitem{C3_1}
Patnaik, A., Behera, L.: `Evolutionary multiobjective optimization based
  control strategies for an inverted pendulum on a cart'. \emph{Proc. IEEE
  Congress on Evolutionary Computation}, Hong Kong, China, Jun. 2008, pp.
  3141--3147

\bibitem{app3}
Ma, H.M., Ng, K.-T., Man, K.F.: `Multiobjective coordinated power voltage
  control using jumping genes paradigm', \emph{{IEEE} Trans. Ind. Electron.},
  2008, 55, (11), pp. 4075--4084

\bibitem{C3_4}
Aggelogiannaki, E., Sarimveis, H.: `A simulated annealing algorithm for
  prioritized multiobjective optimization-implementation in an adaptive model
  predictive control configuration', \emph{{IEEE} Trans. Syst., Man, Cybern.
  {B}}, 2007, 37, (4), pp. 902--915

\bibitem{review}
Fazzolari, M., Alcala, Nojima, R.Y., Ishibuchi, H., Herrera, F.: `A review of
  the application of multiobjective evolutionary fuzzy systems: {C}urrent
  status and further directions', \emph{{IEEE} Trans. Fuzzy Syst.}, 2013, 21,
  (1), pp. 45--65

\bibitem{Boyd_LMI}
Boyd, S., El Ghaoui, L., Feron, E., Balakrishnan, V.: `\emph{Linear Matrix
  Inequalities in System and Control Theory}' (SIAM, Philadelphia, 1994)

\bibitem{predic1}
Wan, Z., Kothare, M.V.: `An efficient off-line formulation of robust model
  predictive control using linear matrix inequalities', \emph{Automatica},
  2003, 39, (5), pp. 837--846

\bibitem{predic2}
Kothare, M.V., Balakrishnan, V., Morari, M.: `Robust constrained model
  predictive control using linear matrix inequalities', \emph{Automatica},
  1996, 32, (10), pp. 1361--1379

\bibitem{Hinf_chiu}
Chiu, W.-Y., Chen, B.-S.: `Multisource prediction under nonlinear dynamics in
  {WSN}s using a robust fuzzy approach', \emph{{IEEE} Trans. Circuits Syst.
  {I}}, 2011, 58, (1), pp. 137--149

\bibitem{case1}
Chiu, W.-Y., Chen, B.-S., Poor, H.V.: `A multiobjective approach for source
  estimation in fuzzy networked systems', \emph{{IEEE} Trans. Circuits Syst.
  {I}}, 2013, 60, (7), pp. 1890--1900

\bibitem{Tanaka_fuzzy}
Tanaka, K., Wang, H.O.: `\emph{Fuzzy Control Systems Design and Analysis: A
  Linear Matrix Inequality Approach}' (Wiley-Interscience, New York, 2001)

\bibitem{uncertain1}
Ho, W.-H., Tsai, J.-T., Chou, J.-H.: `Robust quadratic-optimal control of
  {TS}-fuzzy-model-based dynamic systems with both elemental parametric
  uncertainties and norm-bounded approximation error', \emph{{IEEE} Trans.
  Fuzzy Syst.}, 2009, 17, (3), pp. 518--531

\bibitem{uncertain2}
Lin, C.-M., Li, H.-Y.: `{TSK} fuzzy {CMAC}-based robust adaptive backstepping
  control for uncertain nonlinear systems', \emph{{IEEE} Trans. Fuzzy Syst.},
  2012, 20, (6), pp. 1147--1154

\bibitem{uncertain3}
Ho, W.-H., Chou, J.-H.: `Robust finite-time optimal linear state feedback
  control of uncertain {TS}-fuzzy-model-based control systems'. \emph{Proc.
  IEEE International Conference on Systems, Man and Cybernetics}, Taipei, Taiwan, Oct. 2006, pp. 2590--2595

\bibitem{uncertain4}
Lee, H.J., Park, J.B., Chen, G.: `Robust fuzzy control of nonlinear systems
  with parametric uncertainties', \emph{{IEEE} Trans. Fuzzy Syst.}, 2001, 9,
  (2), pp. 369--379

\bibitem{Uncertain_fuzzy}
Assawinchaichote, W., Nguang, S.K., Shi, P.: `Fuzzy Control and Filter
  Design for Uncertain Fuzzy Systems' (Springer-Verlag GmbH, Berlin Heidelberg, 2006)

\bibitem{DA1}
Henrion, D., Korda, M.: `Convex computation of the region of attraction of
  polynomial control systems', \emph{{IEEE} Trans. Autom. Control}, 2014, 59,
 (2), pp. 297--312

\bibitem{DA2}
Lee, D., Joo, H.Y.H., Tak, M.H.: `Linear matrix inequality approach to
  local stability analysis of discrete-time {Takagi-Sugeno} fuzzy systems',
  \emph{IET Control Theory and Applications}, 2013, 7, (9), pp. 1309--1318


\bibitem{anti_wind}
Bender, F.A., Gomes da Silva, J.M., Tarbouriech, S.: `Convex framework
  for the design of dynamic anti-windup for state-delayed systems', \emph{IET
  Control Theory and Applications}, 2011, 5, (12), pp. 1388--1396

\bibitem{SPR}
Forbes, J.R.: `Dual approaches to strictly positive real controller synthesis
  with a {H}$_2$ performance using linear matrix inequalities', \emph{Int. J.
  Robust. Nonlinear Control}, 2013, 23, (8), pp. 903--918

\bibitem{UnC}
Sadeghzadeh, A.: `Identification and robust control for systems with
  ellipsoidal parametric uncertainty by convex optimization', \emph{Asian
  Journal of Control}, 2012, 14, (5), pp. 1251--1261

\bibitem{MPC}
Poursafar, N., Taghirad, H.D., Haeri, M.: `Model predictive control of
  non-linear discrete time systems: a linear matrix inequality approach',
  \emph{IET Control Theory and Applications}, 2010, 4, (10), pp. 1922--1932


\bibitem{draw_back_DE}
Santana-Quintero, L.V., Coello, C.A.C.: `An algorithm based on
  differential evolution for multi-objective problems', \emph{International
  Journal of Computational Intelligence Research}, 2005, 1, (2), pp. 151--169

\bibitem{two-phase}
Hernandez-Diaz, A.G., Santana-Quintero, L.V. , Coello, C.C., Caballero, R.,
Molina, J.: `A new proposal for multi-objective optimization using
  differential evolution and rough sets theory'. \emph{Proc. Genetic and
  Evolutionary Computation Conference}, Seattle, WA, Jul. 2006, pp. 675--682

\bibitem{mutate_way1}
Zhang, J., Sanderson, A.C.: `{JADE}: {A}daptive differential evolution with
  optional external archive', \emph{{IEEE} Trans. Evol. Comput.}, 2009, 13,
(5), pp. 945--958

\bibitem{mutate_way2}
Hsieh, M.-N., Chiang, T.-C., Fu, L.-C.C.: `A hybrid constraint handling
  mechanism with differential evolution for constrained multiobjective
  optimization'. \emph{Proc. IEEE Congress on Evolutionary Computation},
  New Orleans, LA, Jun. 2011, pp. 1785--1792

\bibitem{reflec_op}
Wang, Y. and Cai, Z.: `Combining multiobjective optimization with differential
  evolution to solve constrained optimization problems', \emph{{IEEE} Trans.
  Evol. Comput.}, 2012, 16, (1), pp. 117--134

\end{thebibliography}
\end{document}